\nonstopmode

\documentclass[sigplan,10pt]{acmart}
\renewcommand\footnotetextcopyrightpermission[1]{}

\acmConference[]{}{}{} 
\acmBooktitle{}         
\acmYear{}              
\acmPrice{}             
\acmISBN{}              
\acmDOI{}               

\usepackage{tikz}
\usepackage{amsmath}
\usepackage{graphicx}  
\usepackage{algorithm}
\usepackage{algorithmic}
\usepackage{amsfonts}
\usepackage{booktabs}
\usepackage{stfloats}
\usepackage{multirow}
\usepackage{tikz}
\usepackage{pgfplots}
\pgfplotsset{compat=1.18}
\usepgfplotslibrary{groupplots}
\usepackage{filecontents}
\usepackage{tabularray}
\usepackage{subfigure}
\usepackage{subcaption}

\begin{document}

\date{}

\title{\Large \bf Robust Synchronisation for Federated Learning in The Face of Correlated Device Failure}

\author{
{\rm Stefan Behfar$^{1}$, Richard Mortier$^{1}$}\\
{\small $^{1}$Computer Lab, University of Cambridge, Cambridge, United Kingdom}
}

\settopmatter{printacmref=false}

\begin{abstract}
Probabilistic Synchronous Parallel (PSP) is a technique in distributed learning systems to reduce synchronization bottlenecks by sampling a subset of participating nodes per round. In Federated Learning (FL), where edge devices are often unreliable due to factors including mobility, power constraints, and user activity, PSP helps improve system throughput. However, PSP has a key limitation: it assumes device behavior is static and different devices are independent. This can lead to unfair distributed synchronization, due to highly available nodes dominating training while those that are often unavailable rarely participate and so their data may be missed.
If both data distribution and node availability are simultaneously correlated with the device, then both PSP and standard FL algorithms will suffer from persistent under-representation of certain classes or groups resulting in inefficient or ineffective learning of certain features.

We introduce Availability-Weighted PSP (AW-PSP), an extension to PSP that addresses the issue of co-correlation of unfair sampling and data availability by dynamically adjusting node sampling probabilities using real-time availability predictions, historical behavior, and failure correlation metrics. A Markov-based availability predictor distinguishes transient \emph{vs} chronic failures, while a Distributed Hash Table (DHT) layer decentralizes metadata, including latency, freshness, and utility scores.
We implement AW-PSP and trace-driven evaluation shows that it improves robustness to both independent and correlated failures, increases label coverage, and reduces fairness variance compared to standard PSP. AW-PSP thus provides an availability-aware, and fairness-conscious node sampling protocol for FL deployments that will scale to large numbers of nodes even in heterogeneous and failure-prone environments.
\end{abstract}

\maketitle

\fancyhead{}            
\fancyfoot[C]{\thepage} 

\section{Introduction}

Traditional approaches to synchronization in federated learning include Bulk Synchronous Parallel (BSP) and Asynchronous Parallel (ASP). BSP requires all workers to provide updates in one round before proceeding to the next, ensuring deterministic correctness but   risking significant delays due to slow or unreliable nodes (termed \emph{stragglers})~\cite{valiant1990bridging}. In contrast, ASP requires no synchronization at all between nodes, allowing updates to occur as quickly as possible but risking stragglers introducing stale gradients making convergence very unstable~\cite{koloskova2022sharper}. A middle ground is Stale Synchronous Parallel (SSP), which allows bounded delays before synchronization, reducing the impact of stragglers in BSP as well as the problem of stale gradients in ASP. However, it still requires global state tracking, which becomes increasingly difficult at scale~\cite{ho2013more}.

Many have explored ways to mitigate the impact of stragglers in Federated Learning (FL). Gradient coding~\cite{tandon2017gradient} reduces the dependency on slow nodes by redundant encoding of gradients so the system can recover the full gradient even when some workers lag or fail. However, this targets synchronous training setups and does not address the unpredictable availability of nodes in federated environments. Sageflow~\cite{sageflow2021} dynamically adjusts weight updates from clients when aggregating updates, mitigating the negative impact of unreliable or malicious nodes. While effective, its reliance on robust aggregation can fail to mitigate delays due to intermittent node participation. FLuID~\cite{fluid2023} uses invariant dropout to selectively ignore contributions from slow clients while maintaining model convergence. However,  ignoring slow nodes altogether can reduce model diversity and hinder convergence in heterogeneous environments.

Some FL frameworks such as Flower~\cite{Flower} and Flame~\cite{Flame} ignore node availability. Others, e.g.,~FedScale~\cite{FedScale}, Oort~\cite{Oort}, and Google's token-based availability model~\cite{GoogleFL}, have considered it but either rely on historical availability traces or perform system check-ins while empirical studies report that node availability fluctuates significantly, between 10\%---80\% across different datasets~\cite{FedScale,Flint,GoogleFL}, and FL systems that ignore this can experience a drop in accuracy of up to 50\% in real-world scenarios~\cite{workshopFL}. Probabilistic Synchronous Parallel (PSP) took a different approach, sampling a random subset of nodes to participate in the synchronization barrier for a given round. This eliminates the need for a central state controller, reduces bandwidth overheads thereby increases scalability as more nodes are involved, and significantly reduces the impact of stragglers as they can simply be ignored without weakening probabilistic convergence guarantees~\cite{wang2017probabilistic}.

However, all the above approaches suffer from (\emph{i})~participation unfairness, and (\emph{ii})~data availability co-correlation.
Participation unfairness arises when always-available nodes (e.g.,~desktop clients or stable cloud servers) dominate the training process, while intermittently available or mobile devices are rarely selected. This skewed participation reduces the diversity of training data, degrades personalization performance, and can amplify societal or geographic biases in the resulting models. Traditional  approaches such as uniform random sampling (e.g.,~PSP~\cite{wang2017probabilistic}), check-in based (e.g.,~Google FL~\cite{GoogleFL}), or trace-replay strategies (e.g.,~FedScale~\cite{FedScale}, Oort~\cite{Oort}) fail to address this imbalance effectively.

Data availability co-correlation is a challenge because devices may become unavailable in correlated patterns for many reasons, from using shared infrastructure (e.g.,~the same WiFi network or power supply) to being subject to shared user behaviors (e.g.,~devices being turned off at the same time due to sleep patterns). Such correlations in device unavailability lead to correlations in missing data as those devices are likely to have data with similar label distributions (e.g.,~when trying to learn features from local processing of SMS messaging data, timezones mean that data in particular languages may become simultaneously unavailable). Existing FL systems largely ignore such patterns, assuming node failures to be independent, leading to fragile training dynamics, stalled rounds, and poor fault tolerance. We begin by articulating and demonstrating these problems next~(\S\ref{s:problem}).

In response we propose \emph{AW-PSP}, a failure-resilient node selection mechanism that extends PSP with dynamic availability modeling and fairness-aware weighting. Unlike static and trace-based methods, AW-PSP predicts device availability in real time using lightweight Markov chains, allowing it to distinguish between transient and chronic failures, adapting sampling accordingly. It also reweights node selection based on both historical availability and recent utility, ensuring that even infrequently available but highly valuable participants are included. AW-PSP accounts for correlated failures by combining behavioral similarity (co-occurring availability patterns) with structural co-failure risks inferred from proximity metrics such as latency and packet loss. These signals are used to adjust per-node selection probabilities, actively diversifying participation to avoid synchronized dropouts. Finally, a Distributed Hash Table (DHT) overlay decentralizes metadata sharing, so that availability, latency, and freshness are propagated in a scalable and fault-tolerant manner, enabling proximity-aware and failure-resilient scheduling~(\S\ref{s:AW-PSP}).

We evaluate AW-PSP by isolating the impact of (un)availability on learning, using published trace data to model correlated and uncorrelated device unavailability coupled with ResNet-18/34 models applied to learn features of the CIFAR-10 dataset. We are not seeking to benchmark AW-PSP against the full landscape of client selection methods (Oort, FedGS, CA-Fed, etc) which often target faster convergence or throughput optimization under different assumptions. Rather, we consider how the additional mechanisms of AW-PSP improve performance over ``classic PSP'' in terms of fairness across nodes and classes, class coverage, and resilience~(\S\ref{s:evaluation}). We finish with a discussion of related work~(\S\ref{s:related}) and presentation of conclusions~(\S\ref{s:conclusions}).

Our contributions are threefold: (\emph{i})~we articulate and quantify the negative impact of availability unawareness in FL~(\S\ref{s:problem}); (\emph{ii})~we extend PSP with mechanisms that mitigate the effects of device unavailability~(\S\ref{s:AW-PSP}); and (\emph{iii})~we show both analytically and empirically that incorporating availability and failure awareness through these mechanisms improves fairness without sacrificing robustness~(\S\ref{s:evaluation}). PSP is a simple yet strong baseline for synchronization in distributed settings, and our results establish that AW-PSP preserves its lightweight and practical design while addressing previously overlooked issues of correlated failures, update starvation, and fairness imbalance. AW-PSP is thus a principled improvement over PSP that is in many ways orthogonal to rather than a universal replacement for existing FL client selection frameworks.

\section{The Problem}
\label{s:problem}

Previous works in the Fair FL literature have assumed that client availability is uncorrelated with
data distribution~\cite{mcmahan2017communication, wang2020optimizing}. This
assumption makes analysis easier but unfortunately fails to capture
real-world dynamics where correlations exist between certain types of data and
client availability. This leads to systematic under-representation of certain
groups/ classes, even when target weights are set to ensure proportionality of
contribution to the learning process with respect to the amount of data
available at each client.

We begin by demonstrating the practical impact of data-availability
co-correlation on system performance. Using Mininet~\cite{Yan2015VTMininet} we
emulate a small FL system comprising 10 clients, each exhibiting realistic device behavior such as intermittent availability due to charging, sleep cycles, or user activity. These clients simulate common patterns of correlated behavior observed in practice and enable us to examine how clients (groups) becoming unavailable at the same time can slow training progress and create fairness imbalances, regardless of whether the network itself is congested.

We examine how client unavailability affects model inference accuracy when
unavailability is both correlated and uncorrelated with features of the training
data distributed across clients. This enables detailed evaluation of how
different unavailability modes—mirroring real outages—impact federated learning
outcomes in a controlled, yet realistic, network environment.

Specifically, we fine-tune a ResNet34~\cite{He2016DeepResNet} model on a subset
of the CIFAR-10~\cite{Krizhevsky2009CIFAR10} dataset using federated training
over 50 epochs with Cross Entropy Loss~\cite{CrossEntropy} and the Adam
optimizer~\cite{Kingma2014Adam}. The central server broadcasts the current
global ResNet-34 model to all active workers, each of which performs $E$ local
training epochs on its class-specific data using cross-entropy loss and the Adam
optimizer. Workers then send their locally updated model weights back to the
server, which aggregates them by computing a weighted average. The global model
is updated with this averaged state, and the process repeats for a total of 50
communication rounds.

We assign data from specific CIFAR-10 classes evenly among workers,
$w_1, \ldots, w_N$, and train a ResNet34 DNN model from the resulting data
distributions according to FedAvg~\cite{mcmahan2017communication}. After training we simulate
an inference task by sending a mini-batch of images to each host, emulating
real-world edge deployments where inference is performed locally on distributed
devices after a federated training phase.

We then implemented a process for inducing correlated unavailability across
different groups of client devices. We started with the trace of devices involved
in an FL system published by ~\cite{yang2021characterizing}.
Based on the availability trace data provided in the GitHub link associated with
~\cite{yang2021characterizing}, the distribution of device availability percentages in Figure~\ref{f:distribution} indicates that the majority of devices exhibit less than 40\% availability, where device availability
percentage is defined as the percentage of time between the first and last times a device was seen to be live that it was available to perform FL, which presents challenges for designing reliable and robust distributed or FL systems over these edge devices. When high unavailability happens in a correlated manner, large groups of devices become simultaneously unavailable. This can drastically reduce the number of available clients, sometimes leaving too few (if any!) devices to participate.

This trace contains a
number of timestamped events per device, including \emph{WiFi on} and
\emph{battery charging on} which we take in combination to indicate that a
device is network-connected and plugged in (ready to participate in federated training according to Google~\cite{GoogleFL}), followed by \emph{WiFi
  off} or \emph{battery charged off} which we take to indicate
that the device has now gone to sleep, i.e.,~has become unavailable.

\begin{figure}
  \centering
  \includegraphics[width=1.0\linewidth]{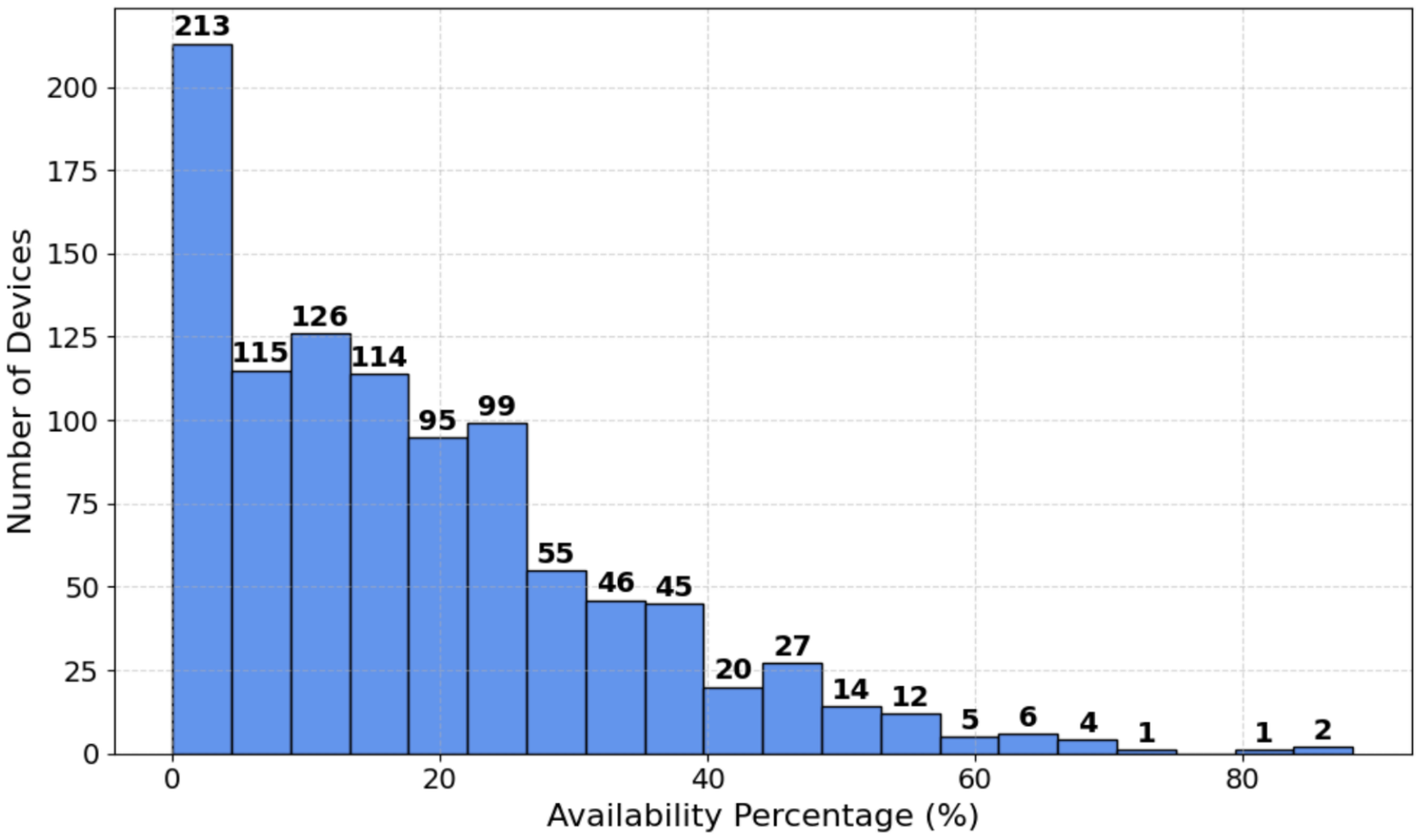}
  \caption{\label{f:distribution}Distribution of device availability percentages
    for the trace data~\cite{yang2021characterizing}, where \emph{device
      availability percentage} is defined as the percentage of time between the
    first and last times a device was seen to be live and available to
    perform FL, i.e.,~was charging and connected to Wi-Fi. Out of 1000 devices in
    the trace, 213 were available for $<$5\% of the time, and over 60\% were available for less than half the time.}
\end{figure}

We randomly sample from that trace to produce per-client device availability
traces, where device availability is correlated as described, over a specified
number of time steps. To compute the correlations between client availability,
we first generate binary availability vectors for each client with 1/0
indicating that the client is active/inactive in each time step. We then calculate
the pairwise correlation matrix from these vectors, quantifying how much clients
become unavailable and remain unavailable together. We also intermingle other
dropouts and reconnections at random to simulate behaviours such as signal loss,
device reboots, and app interruptions that would occur in practice.
This sampling process thus explores two different behaviours: \emph{(i) random}
where each node becomes unavailable independently with a fixed probability,
simulating uncorrelated unavailability, and \emph{(ii) correlated} where nodes fail
with probability dependent on the correlation of their unavailability with other
nodes -- so if one of two nodes that exhibit correlation above a threshold
becomes unavailable, so does the other -- simulating correlated unavailability.

The results (averaged over 5 retrials) show a clear structural difference between random and correlated unavailability. Under random failures (Table~\ref{tab:random_summary_merged}), each host fails independently with probability $p_f$, producing a stochastic spread of unavailable nodes. As $p_f$ increases, the number of active hosts decreases gradually. Performance exhibits distinct plateaus determined by the number of labels per client (approximately 73\%, 76\%, and 78.5\% for labels/client = 1, 5, and 10, respectively). This indicates that, under random failures, the reported mean accuracy is largely governed by the underlying data partitioning—specifically label coverage—rather than the exact number of active participants, provided that sufficient clients remain to cover the label space.
This stability is partly a consequence of the evaluation protocol: accuracy is computed only over labels that remain covered by at least one active host, and missing labels are excluded rather than counted as errors. When the number of active clients drops below this implicit coverage threshold, large portions of the label space become unrepresented, leading to sharp degradation in (or even undefined) accuracy.

Under correlated failures (Table~\ref{tab:correlated_summary_merged}), unavailability is clustered in that hosts fail together if they share high availability correlation, defined by a threshold $c$. For low correlation thresholds ($c \le 0.4$), all hosts can become unavailable simultaneously, resulting in complete system collapse and zero accuracy. Once the correlation threshold exceeds that threshold ($c \ge 0.5$), the system transitions sharply from minimal participation (1–2 active hosts) to a stable operating regime with sufficient active clients. Notably, the mean accuracy immediately recovers to plateau levels comparable to random failures, showing that co-correlation primarily affects availability, not model quality. This demonstrates that correlated failures introduce abrupt, threshold-driven system instability that is not captured by independent failure models.

The per-class results (Tables~\ref{tab:random_perclass_merged} and \ref{tab:correlated_perclass_merged}) further highlight this distinction. Under random failures, class coverage degrades in a scattered manner, with different subsets of labels missing across runs. In contrast, correlated failures produce structured coverage gaps, where specific groups of classes disappear together due to shared availability patterns. When only a small number of hosts remain active, the model is effectively trained on a reduced label space, leading to biased predictions. This effect is most pronounced for labels/client = 1, where each client contributes a narrow slice of the data distribution and correlated unavailability can eliminate entire regions of the label space. For higher label densities (5 and 10), although individual clients hold more diverse data, correlated failures can still remove overlapping subsets of labels simultaneously. As a result, redundancy does not fully prevent coverage loss; instead, it mainly stabilizes performance once a sufficient subset of classes remains available, preserving the plateau behavior observed in the aggregate results.

Overall, the number of active hosts determines whether the system is operational, while the joint structure of data partitioning and availability correlation determines the effective training distribution and thus the achievable accuracy. Random failures primarily reduce participation, whereas correlated failures reshape the data distribution itself by removing coherent subsets of clients. Increasing labels per client improves robustness by introducing redundancy in class coverage, making the system resilient to both forms of unavailability. Conversely, sparse partitions amplify the impact of correlated failures, leading to threshold-driven collapse, systematic bias, and reduced generalization. These findings (relevant to non-IID effects and data heterogeneity) align with prior work on mitigating non-IID effects through data sharing, personalized FL, or heterogeneity-aware aggregation techniques~\cite{zhao2018federated, li2019fedmd, fallah2020personalized, smith2017federated, li2020federated, li2019fair, sattler2020clustered}, highlighting the critical interplay between failure dynamics and data heterogeneity in shaping system behavior.

\begin{table*}[ht]
\centering
\footnotesize
\caption{Summary results under random failures for different numbers of labels per client.}
  \vspace{-.3cm}
\label{tab:random_summary_merged}
\begin{tabular}{c|ccc|ccc|ccc}
\hline
& \multicolumn{3}{c|}{Labels/client = 1} & \multicolumn{3}{c|}{Labels/client = 5} & \multicolumn{3}{c}{Labels/client = 10} \\
$p_f$ & Mean Acc. & Active & Failed & Mean Acc. & Active & Failed & Mean Acc. & Active & Failed \\
\hline
0.1 & 73.50 & 10 & 0 & 76.00 & 9 & 1 & 78.50 & 10 & 0 \\
0.2 & 73.50 & 10 & 0 & 76.00 & 9 & 1 & 78.50 & 7 & 3 \\
0.3 & 75.63 & 8  & 2 & 76.00 & 9 & 1 & 78.50 & 3 & 7 \\
0.4 & 67.50 & 6  & 4 & 76.00 & 7 & 3 & 78.50 & 4 & 6 \\
0.5 & 78.33 & 3  & 7 & 76.00 & 5 & 5 & 78.50 & 6 & 4 \\
0.6 & 65.00 & 5  & 5 & 76.00 & 3 & 7 & 78.50 & 6 & 4 \\
0.7 & 76.25 & 4  & 6 & 76.00 & 2 & 8 & 78.50 & 1 & 9 \\
0.8 & 85.00 & 1  & 9 & --    & 0 & 10 & 78.50 & 2 & 8 \\
0.9 & --    & 0  & 10 & 76.00 & 2 & 8 & --    & 0 & 10 \\
\hline
\end{tabular}%
\end{table*}

\begin{table*}[ht]
\centering
\footnotesize
\caption{Summary results under correlated failures for different numbers of labels per client.}
  \vspace{-.3cm}
\label{tab:correlated_summary_merged}
\begin{tabular}{c|ccc|ccc|ccc}
\hline
& \multicolumn{3}{c|}{Labels/client = 1} & \multicolumn{3}{c|}{Labels/client = 5} & \multicolumn{3}{c}{Labels/client = 10} \\
$c$ & Mean Acc. & Active & Failed & Mean Acc. & Active & Failed & Mean Acc. & Active & Failed \\
\hline
0.1 & --    & 0  & 10 & --    & 0  & 10 & --    & 0  & 10 \\
0.2 & --    & 0  & 10 & --    & 0  & 10 & --    & 0  & 10 \\
0.3 & --    & 0  & 10 & --    & 0  & 10 & --    & 0  & 10 \\
0.4 & --    & 0  & 10 & --    & 0  & 10 & --    & 0  & 10 \\
0.5 & 70.00 & 1  & 9  & 83.00 & 1  & 9  & 78.50 & 1  & 9  \\
0.6 & 60.00 & 2  & 8  & 83.00 & 2  & 8  & 78.50 & 2  & 8  \\
0.7 & 66.67 & 6  & 4  & 76.00 & 6  & 4  & 78.50 & 6  & 4  \\
0.8 & 73.50 & 10 & 0  & 76.00 & 10 & 0  & 78.50 & 10 & 0  \\
0.9 & 73.50 & 10 & 0  & 76.00 & 10 & 0  & 78.50 & 10 & 0  \\
\hline
\end{tabular}%
\end{table*}

\setlength{\tabcolsep}{4pt}  
\begin{table*}[ht]
\centering
\footnotesize
\caption{Per-class accuracies under random failures for labels/client = 1, 5, and 10. Empty cells indicate no value could be computed because no response was received from a host trained on that class.}
  \vspace{-.3cm}
\label{tab:random_perclass_merged}
\begin{tabular}{c|cccccccccc|cccccccccc|cccccccccc}
\hline
& \multicolumn{10}{c|}{Labels/client = 1} & \multicolumn{10}{c|}{Labels/client = 5} & \multicolumn{10}{c}{Labels/client = 10} \\
$p_f$ & 1 & 2 & 3 & 4 & 5 & 6 & 7 & 8 & 9 & 10
     & 1 & 2 & 3 & 4 & 5 & 6 & 7 & 8 & 9 & 10
     & 1 & 2 & 3 & 4 & 5 & 6 & 7 & 8 & 9 & 10 \\
\hline
0.1 & 65 & 90 & 50 & 85 & 70 & 80 & 50 & 75 & 85 & 85
    & 56 & 85 & 80 & 90 & 72 & 80 & 76 & 80 & 72 &
    & 90 & 80 & 85 & 80 & 60 & 65 & 70 & 85 & 95 & 75 \\

0.2 & 65 & 90 & 50 & 85 & 70 & 80 & 50 & 75 & 85 & 85
    & 56 & 85 & 80 & 72 & 90 & 76 & 80 & 80 & 72 &
    & 70 & 77 & 83 & 93 & 80 & 77 & 65 &  &  &  \\

0.3 & 65 & 90 & 85 & 70 & 50 & 75 & 85 & 85 &  &
    & 85 & 76 & 92 & 75 & 76 & 80 & 55 & 88 & 50 &
    & 77 & 76 & 83 &  &  &  &  &  &  &  \\

0.4 & 65 & 50 & 70 & 50 & 85 & 85 &  &  &  &
    & 56 & 80 & 72 & 91 & 76 & 77 & 72 &  &  &
    & 84 & 82 & 76 & 72 &  &  &  &  &  &  \\

0.5 & 90 & 70 & 75 &  &  &  &  &  &  &
    & 66 & 80 & 91 & 72 & 77 &  &  &  &  &
    & 80 & 73 & 80 & 73 & 80 & 87 &  &  &  &  \\

0.6 & 65 & 50 & 50 & 75 & 85 &  &  &  &  &
    & 66 & 72 & 83 &  &  &  &  &  &  &
    & 80 & 73 & 80 & 73 & 80 & 87 &  &  &  &  \\

0.7 & 90 & 50 & 85 & 80 &  &  &  &  &  &
    & 83 & 69 &  &  &  &  &  &  &  &
    & 79 &  &  &  &  &  &  &  &  &  \\

0.8 & 85 &  &  &  &  &  &  &  &  &
    &  &  &  &  &  &  &  &  &  &
    & 80 & 77 &  &  &  &  &  &  &  &  \\

0.9 &  &  &  &  &  &  &  &  &  &
    & 83 & 69 &  &  &  &  &  &  &  &
    &  &  &  &  &  &  &  &  &  &  \\
\hline
\end{tabular}%
\end{table*}

\setlength{\tabcolsep}{4pt}  
\begin{table*}[ht]
\centering
\footnotesize
\caption{Per-class accuracies under correlated failures for labels/client = 1, 5, and 10. Empty cells indicate no value could be computed because no response was received from a host trained on that class.}
  \vspace{-.3cm}
\label{tab:correlated_perclass_merged}
\begin{tabular}{c|cccccccccc|cccccccccc|cccccccccc}
\hline
& \multicolumn{10}{c|}{Labels/client = 1} & \multicolumn{10}{c|}{Labels/client = 5} & \multicolumn{10}{c}{Labels/client = 10} \\
$c$  & 1 & 2 & 3 & 4 & 5 & 6 & 7 & 8 & 9 & 10
     & 1 & 2 & 3 & 4 & 5 & 6 & 7 & 8 & 9 & 10
     & 1 & 2 & 3 & 4 & 5 & 6 & 7 & 8 & 9 & 10 \\
\hline
0.1 &  &  &  &  &  &  &  &  &  &
    &  &  &  &  &  &  &  &  &  &
    &  &  &  &  &  &  &  &  &  &  \\

0.2 &  &  &  &  &  &  &  &  &  &
    &  &  &  &  &  &  &  &  &  &
    &  &  &  &  &  &  &  &  &  &  \\

0.3 &  &  &  &  &  &  &  &  &  &
    &  &  &  &  &  &  &  &  &  &
    &  &  &  &  &  &  &  &  &  &  \\

0.4 &  &  &  &  &  &  &  &  &  &
    &  &  &  &  &  &  &  &  &  &
    &  &  &  &  &  &  &  &  &  &  \\

0.5 & 70 &  &  &  &  &  &  &  &  &
    & 83 &  &  &  &  &  &  &  &  &
    & 79 &  &  &  &  &  &  &  &  &  \\

0.6 & 50 & 70 &  &  &  &  &  &  &  &
    & 76 & 90 &  &  &  &  &  &  &  &
    & 80 & 77 &  &  &  &  &  &  &  &  \\

0.7 & 65 & 90 & 50 & 70 & 50 & 75 &  &  &  &
    & 66 & 76 & 92 & 76 & 88 & 72 &  &  &  &
    & 80 & 73 & 80 & 73 & 80 & 87 &  &  &  &  \\

0.8 & 65 & 90 & 50 & 85 & 70 & 80 & 50 & 75 & 85 & 85
    & 85 & 85 & 80 & 75 & 90 & 80 & 80 & 55 & 80 & 50
    & 90 & 80 & 85 & 80 & 60 & 65 & 70 & 85 & 95 & 75 \\

0.9 & 65 & 90 & 50 & 85 & 70 & 80 & 50 & 75 & 85 & 85
    & 85 & 85 & 80 & 75 & 90 & 80 & 80 & 55 & 80 & 50
    & 90 & 80 & 85 & 80 & 60 & 65 & 70 & 85 & 95 & 75 \\
\hline
\end{tabular}%
\end{table*}

\section{Availability Weighted PSP}
\label{s:AW-PSP}

In AW-PSP, the probability of selecting a node is no longer uniform as in Classic-PSP, but instead depends on a composite availability score that incorporates three factors: the node’s ability to complete both computation and communication within a tolerable deadline, its likelihood of recovering from recent failures, and its risk of being affected by correlated failures with nearby nodes. Concretely, each node’s selection probability combines its empirical availability (estimated through recent success rates of computation and communication), a recovery probability that accounts for transient failures and allows temporarily offline nodes to rejoin, and a group-level correlation penalty that reduces the weight of nodes likely to fail together (e.g., on the same subnet or switch). These availability estimates are further refined by blending local history with information from neighboring nodes in the DHT overlay, ensuring that proximity in terms of latency and computation speed also informs the sampling. By embedding these enhanced availability scores into the PSP framework, AW-PSP adaptively prioritizes nodes that are both reliable and diverse, thereby mitigating stragglers, reducing synchronization delays, and improving fairness compared to the purely random selection of Classic-PSP.

\subsection{Modeling Node Availability}
Node availability can be defined as:

\begin{equation}
a_i(t) = P(T_{i}^{\text{comp}}(t) + T_{i}^{\text{comm}}(t) \leq T_{\max})
\end{equation}
where:
\begin{itemize}
    \item $T_{i}^{\text{comp}}(t)$ is the computation time of node $i$ for the current training round.
    \item $T_{i}^{\text{comm}}(t)$ is the communication delay for sending the model update.
    \item $T_{\max}$ is the maximum tolerable delay before a node is considered unavailable.
\end{itemize}
A node is available if it can complete and communicate updates within $T_{\max}$ with high probability.
We estimate $a_i(t)$ using historical response times with an exponentially weighted moving average (EWMA):

\begin{equation}
a_i(t) = \lambda a_i(t-1) + (1 - \lambda) I(T_{i}^{\text{comp}}(t) + T_{i}^{\text{comm}}(t) \leq T_{\max})
\end{equation}

where, $I(\cdot)$ is an indicator function. $\lambda$ (e.g., 0.9) controls the decay rate, giving more or less weight to recent behavior.

\begin{equation}
I(T_{i}^{\text{comp}}(t) + T_{i}^{\text{comm}}(t) \leq T_{\max}) =
\begin{cases}
1, & \text{if node } i \text{ finished on time} \\
0, & \text{otherwise}
\end{cases}
\end{equation}

\textbf{Predicting node availability based on DHT:} To predict node availability in computation and communication, we leverage Distributed Hash Table (DHT)-based metadata to benefit from Physical and Computation Proximity:

\begin{enumerate}
    \item \textbf{Physical proximity:}
    \begin{equation}
        d_{i,j} = RTT_{i,j} = RTT_{j,i}
    \end{equation}
    \begin{equation}
        d_{i,j} \leq \tau_d
    \end{equation}

    \item \textbf{Computation proximity:}
    \begin{equation}
        \Delta T_{i,j}^{\text{comp}} = |T_i^{\text{comp}} - T_j^{\text{comp}}|
    \end{equation}
    \begin{equation}
        \Delta T_{i,j}^{\text{comp}} \leq \tau_c
    \end{equation}
\end{enumerate}

 We estimate computation and communication availability using recent history across the last $T$ rounds:

\begin{equation}
    a_i^{\text{comp}} = \frac{1}{T} \sum_{t=1}^{T} H_i^{\text{comp}}(t),
    \quad
    a_i^{\text{comm}} = \frac{1}{T} \sum_{t=1}^{T} H_i^{\text{comm}}(t)
\end{equation}

\begin{itemize}
    \item $H_i^{\text{comp}}(t) = 1$ if the node has participated in more than a threshold fraction of rounds ($70\%$) up to time $t$, and $0$ otherwise. $\tau_c$ is a threshold for historical success rate.
    \item $H_i^{\text{comm}}(t) = 1$ if latency is below a threshold $\tau_{\text{lat}}$ (e.g., $100$\,ms) and packet loss below a threshold $\tau_{\text{loss}}$ (e.g., $40\%$), and $0$ otherwise.
\end{itemize}

The overall availability is then:

\begin{equation}
    a_i(t) = a_i^{\text{comp}} \cdot a_i^{\text{comm}}
\end{equation}

\textbf{Availability based on recovery probability:}
The recovery probability, \(\beta_i\), represents the likelihood that a node will recover from a failure in the current round and return to completing its tasks within the maximum allowable time. The recovery probability \(\beta_i\) is modeled as the conditional probability that the node will meet the deadline in the next round, given that it failed in the previous round.
We define \(\beta_i\) as:
\begin{equation}
\begin{aligned}
\beta_i(t) \;&=\; \Pr\big(T_i^{\mathrm{comp}}(t+1) + T_i^{\mathrm{comm}}(t+1) \le T_{\max} \;\big|\;\\
&T_i^{\mathrm{comp}}(t) + T_i^{\mathrm{comm}}(t) > T_{\max}\big)
\end{aligned}
\end{equation}
\[
\beta_i(t) \approx \frac{\text{\# times node $i$ recovered  after a failure (in last $T$ rounds)}}{\text{\# times node $i$ failed in last $T$ rounds}}
\]
which takes into account the node's previous failures and provides a more nuanced prediction for future availability. This approach assumes that there is some likelihood that a node may recover from a failure, possibly due to transient network issues, resource availability changes, or system optimization.
Future availability is estimated based on the node’s previous availability and its predicted return probability:

\begin{equation}
    \tilde{a}_i(t) = \tilde{a}_i(t-1) + (1 - \tilde{a}_i(t-1)) \beta_i(t)
\end{equation}

where \(\tilde{a}_i(t)\) is the baseline availability estimate.

\subsection{Modeling Runtime Co-Correlation in Federated Edge Environments}

In real-world federated systems, device failures and unavailabilities are often correlated rather than independent. Devices co-located in the network topology (e.g., sharing the same subnet or region) or exhibiting similar usage patterns (e.g., nighttime disconnection) may fail together.

\paragraph{Trace-based correlation ($\gamma^{\text{trace}}_{i,j}$).}
This denotes a static correlation score between nodes $i$ and $j$, computed offline from their historical binary availability traces. It is loaded before the simulation and remains fixed during execution. This score captures long-term similarity in availability behavior.

\paragraph{Runtime co-failure correlation ($\gamma^{\text{fail}}_{i,j}(t)$).}
This denotes a dynamic co-failure score between nodes $i$ and $j$ at round $t$, computed from the history of rounds in which both nodes were observed as failed. In the implementation, each pair of failed neighboring nodes records the current round in a \texttt{failure\_correlation} structure, and the runtime score is defined as the fraction of elapsed rounds in which the pair has failed together. This score is updated online during execution and captures recent correlated-failure behavior.

\begin{equation}
\gamma_{i,j}(t) = \alpha \cdot \gamma^{\text{trace}}_{i,j} + (1 - \alpha) \cdot \gamma^{\text{fail}}_{i,j}(t)
\end{equation}

where \(\alpha \in [0,1]\) balances the contribution between historical co-behavior and recent co-failure evidence.
Each node \(i\) is associated with a set of significantly correlated peers:
\(\mathcal{G}_i(t) = \{ j \mid \gamma_{i,j}(t) > \tau_{\text{corr}} \}\)
and its correlation penalty as:

\begin{equation}
\rho_i(t) = \sum_{j \in \mathcal{G}_i(t)} \hat{C}_{i,j} \cdot \gamma_{i,j}(t)
\end{equation}

where \(\hat{C}_{i,j}\) is a proximity weight, e.g., based on DHT logical distance or latency measurements. We define a latency threshold in the empirical section. The correlation penalty adjusts the availability prediction for node \(i\) as follows:

\begin{equation}
a_i^{\text{adj}}(t) = \tilde{a}_i(t) \cdot (1 - \rho_i(t))
\end{equation}

 This formulation downweights nodes that are heavily entangled in unreliable clusters, improving resilience against cascading or simultaneous failures.

\subsection{Sampling Probability in AW-PSP}

We define the final sampling probability \( p_i' \) for each node \(i\) below. The theoretical justification and guarantees aree discussed in appendix A.

\begin{equation}
p_i'(t) = p \cdot \left[ \left(a_i^{\text{comp}} \cdot a_i^{\text{comm}}\right) + \left(1 - a_i^{\text{comp}} \cdot a_i^{\text{comm}}\right) \cdot \beta_i(t) \right] \cdot \left(1 - \rho_i(t)\right)
\end{equation}

where:
\begin{itemize}
  \item \( p \) is the baseline uniform sampling probability,
  \item \( a_i^{\text{comp}} \cdot a_i^{\text{comm}} \) is the immediate availability estimate of node \(i\),
  \item \( \beta_i(t) \) is the return (recovery) probability if the node was recently unavailable,
  \item \( \rho_i(t) \) is the correlation penalty capturing the node's entanglement with unstable peers (frequently failing, dropping, or unreliable).
\end{itemize}

\section{Evaluation}
\label{s:evaluation}

Our methodological goal is to isolate the effects of availability-weighted sampling and correlated failures on distributed synchronization. Below we quantify the workload and show that a single, well-provisioned workstation suffices while preserving experimental control and reproducibility.

\textbf{Workload characteristics.}
CIFAR-10 occupies $\sim$170\,MB on disk; when cached as \texttt{float32}, it requires $\sim$220\,MB RAM if fully resident, but our experiments stream in mini-batches or single images during inference, keeping the steady-state working set substantially lower. We use ResNet-18/34 for inference on containerized clients. ResNet-18 has $\sim$11.7M parameters ($\sim$45\,MB weights), ResNet-34 $\sim$21.8M ($\sim$85\,MB weights). With batch size~32 and no gradient storage, per-process inference memory is typically $\sim$150--300\,MB (weights + activations + framework/runtime). We start with 10 concurrent physical clients plus one coordinator.

\textbf{Training and Resource implications.}
Training increases per-process memory and CPU demands due to activations, optimizer state, and gradient buffers. We use a single server host with 64 GB RAM and 110 vCPUs (our bottleneck); also we adopt practical mitigations: (\emph{i})~\textit{small local batch sizes of 32 images}, (\emph{ii})~modest model variant (ResNet-18) which emulates a mobile device resource capacity, and (\emph{iii})~\textit{Docker resource limits} to emulate heterogeneous device capacities. In reality, each client utilizes up to 800\% CPU (Saturating all 8 cores) with an average memory usage of 687 MB.

Each global round consists of (\emph{i})~selecting a subset of clients according to the sampling policy (AW-PSP or Classic-PSP), (\emph{ii})~dispatching the current global model to the selected clients, (\emph{iii})~each selected client performing a fixed number of \emph{local} training epochs (three local epochs in our setup), and (\emph{iv})~returning the local update to the coordinator for aggregation. We run the full experiment for 50 global epochs (more epochs result in no performance improvement). After each global epoch the coordinator evaluates the aggregated global model by distributing a small test partition to available clients and collecting local predictions; these per-epoch evaluations produce the accuracy/loss curves reported in the results. In our experimental setup, we fixed the target probabilistic client selection to 5 nodes per round, while assuming 4 neighbors per node in the DHT overlay.

\textbf{Process isolation.}
Each client and the coordinator run in separate Docker containers. In our setup we instantiate 11 containers (10 clients and 1 coordinator). Each container runs an independent Python runtime with the model weights and inference scripts. The CIFAR-10 dataset is mounted read-only from the host into each container, which avoids duplication of storage while ensuring consistent data availability across all clients.FL deployments in real-world environments often involve thousands to millions of devices, such as mobile phones and edge nodes, which exhibit heterogeneous data distributions, intermittent availability, and correlated failures. However, it is impractical to physically deploy and execute training across such a large number of devices in a controlled experimental environment. Even with cloud infrastructure, running hundreds or thousands of independent training processes simultaneously introduces significant overhead in compute, memory, and orchestration.
To bridge this gap, we adopt the concept of \emph{logical clients}. Logical clients represent a large population of virtual participants whose availability patterns, data characteristics, and correlation structures are explicitly modeled, while only a smaller number of physical workers execute training. This abstraction enables us to evaluate FL algorithms under realistic large-scale conditions.
In each training round, the server first selects a subset of logical clients according to the scheduling policy. These selected logical clients are then mapped to a smaller pool of physical clients using a wave-based execution mechanism. If $m=[10,30,100]$ logical clients are selected among $N==[100,300,1000,3000]$ population and $P$ physical clients are available, training is executed in $\lceil m / P \rceil$ waves, where each physical client sequentially processes multiple logical clients. This decoupling enables scalable evaluation while maintaining bounded system resource usage.

\textbf{Failure injection.}
Failures are injected at the network layer using the \texttt{TopologyProvider} abstraction, which wraps each client container with configurable link properties. For each node, we instantiate e.g. \texttt{link\_latency}=20, \texttt{link\_loss}=5. Here, \texttt{link\_latency} and \texttt{link\_loss} specify the baseline delay (20\,ms) and packet loss rate (5\%) applied to the virtual link connecting the container to the bridge. These impairments are implemented internally with Linux, ensuring they are enforced at the veth interface that connects the container to the virtual switch.
To detect failures, we continuously measure the round-trip latency of each client by issuing controlled \texttt{ping} probes. A node is considered \emph{failed} if the measured latency exceeds a threshold of 100\,ms, which indicates either severe congestion or effective disconnection. The choice of 100\,ms is motivated by synchronization requirements in distributed training: latencies above this range cause straggler effects that significantly degrade training throughput and convergence speed. Thus, while a node may still be technically reachable, any client exceeding this bound is treated as unavailable for synchronization purposes.

\subsection{Trace Data and Correlated Failures}
\label{s:corrfailures}
While trace data provides insight into device availability patterns, it does not capture actual runtime failures that occur during training—such as dropped connections or infrastructure-related outages. To model better these conditions, we introduce synthetic failures driven by real-time telemetry:

\begin{itemize}
\item \textbf{Independent Failures:} A node is marked as failed during a round if its measured latency exceeds a defined threshold (here we consider 100ms) or if its packet loss rate indicates unreliable connectivity. These failures are dynamically detected based on runtime metrics rather than injected probabilistically.

\item \textbf{Correlated Failures:} Nodes that are topologically or behaviorally close (e.g., sharing DHT proximity or latency profiles, where 4 closest nodes are assumed neighbors) are monitored for correlated failure likelihood. If a node fails, its neighbor with high correlation score is considered as a correlated failure, if trace score or fail score is greater than a threshold, simulating clustered failures due to shared network infrastructure.
\end{itemize}

These failure events are reflected in each node's DHT metadata, which includes latency measurements, packet loss, recent availability history, and round participation. The AW-PSP strategy leverages this decentralized availability data to prioritize reliable nodes and avoid synchronized dropouts.
Each node periodically updates its availability based on its recent computation and communication success rates. Availability is calculated separately for computation and communication tasks and integrated into the metadata for querying, according to Eq. 8,
where $H_i^{\text{comp}}(t)$ and $H_i^{\text{comm}}(t)$ indicate success (1) or failure (0) of computation and communication tasks at time $t$, respectively. If latency is greater than a threshold or server state update exceeds a timeout, the node is considered failed.
Each node periodically updates its metadata in the DHT based on its availability and performance.

\begin{algorithm}[ht]
\scriptsize
\caption{Federated Learning with DHT-Based Node Selection}
\begin{algorithmic}[1]
    \FOR{each candidate node $i$}
        \STATE \textbf{Availability:}
        \[
    \tilde{a}_i(t) = \tilde{a}_i(t-1) + (1 - \tilde{a}_i(t-1)) \beta_i(t)
        \]

        \STATE \textbf{Correlated Failure Risk:}
        \[
        \rho_i(t) = \sum_{j \in \mathcal{G}_i} \hat{C}_{i,j} \cdot \gamma_{i,j}(t)
        \]

        \STATE \textbf{Adjusted Availability:}
        \[
        a_i^{\text{adj}}(t) = \tilde{a}_i(t) \cdot (1 - \rho_i(t))
        \]

         \STATE \textbf{Final Sampling Probability:}
        \[
         \begin{aligned}
        p_i'(t) &= p \cdot \Big[\,(a_i^{\text{comp}} \cdot a_i^{\text{comm}}) + (1 - a_i^{\text{comp}} \cdot a_i^{\text{comm}})
        \cdot \beta_i(t)\,\Big] \\
        & \quad\quad \cdot (1 - \rho_i(t))
        \end{aligned}
        \]
  \
\ENDFOR
  \STATE \textbf{Node Selection:} Choose top $N$ nodes with highest $\text{score}_i$.

    \STATE \textbf{Distribute Training:} Send global model and instructions to selected nodes.

    \STATE \textbf{Local Training:} Each node trains on its local data and sends model updates back.

    \STATE \textbf{Aggregation:} Server aggregates all updates and refines the global model.

\end{algorithmic}
\end{algorithm}

To evaluate the effectiveness of our proposed sampling strategy, we implement a baseline version of PSP (Classic-PSP) that selects a fixed number of nodes uniformly at random from the pool of currently online devices, without considering any availability, freshness (i.e.,~recency of participation), or failure correlation information. This serves as a control to isolate the benefits introduced by the AW-PSP mechanism.
Our AW-PSP algorithm prioritizes node selection based on a dynamic score that integrates real-time availability predictions, freshness, and class diversity, according to Algorithm 1. Additionally, it excludes nodes participating in correlated failure clusters, identified using two orthogonal signals: user behavior traces and proximity-based co-failure detection (per~\S\ref{s:corrfailures}). This proactive filtering ensures robustness against subnet-level outages or synchronous dropouts.
The core selection mechanism for the AW-PSP strategy aims to choose a subset of active nodes to participate in the current training round. This selection is done by prioritizing nodes based on their predicted availability, their freshness, and ensuring class label coverage:

\begin{itemize}
\item Filter out unavailable or failed nodes, including those involved in correlated failures.
\item Among the remaining active nodes, calculate a selection score for each defined as the product of that node’s availability and freshness. Nodes that are highly available and recently active are ranked higher.
\item From the top-ranked nodes,  greedily select clients to maximize class label coverage, ensuring that the selected nodes represent as many data labels as possible.
\item For each selected node, evaluate their contribution by computing loss deltas.
\item Finally,  calculate three fairness metrics by analyzing the variance of loss deltas across nodes and classes.
\end{itemize}

\subsection{Results and Discussion}

In designing our evaluation, we explicitly structure the experiments to answer core research questions:

\begin{itemize}
    \item \textbf{RQ1:} How does client-level data heterogeneity impact global accuracy?
    \item \textbf{RQ2:} Does AW-PSP improve fairness across nodes and classes compared to Classic-PSP?
    \item \textbf{RQ3:} How well does AW-PSP maintain class coverage and balance label participation relative to Classic-PSP?
    \item \textbf{RQ4:} Under correlated failures, does AW-PSP reduce the fraction of failed selections and improve resilience?
\end{itemize}

\subsubsection{Global accuracy for different heterogeneity}
In order to evaluate the impact of data heterogeneity on system performance, we calculate the overall training accuracy under different label distributions across clients in a system with 100 clients. Specifically, we consider increasingly less heterogeneous settings in which each client holds 2 labels, 5 labels, or all 10 labels. As shown in Figure~\ref{fig:accuracy-comparison}, the results indicate that AW-PSP consistently outperforms both PSP and Oort across all heterogeneity levels. In the 2-label setting, where heterogeneity is strongest, all methods exhibit substantial fluctuations and relatively low accuracy, but AW-PSP generally attains higher peaks and stronger late-round performance than PSP and Oort, showing better robustness under severe non-IID conditions. In the 5-label setting, although the three methods remain unstable, AW-PSP still tends to achieve competitive or superior accuracy more frequently than the baselines, again remaining above PSP and Oort in many rounds. In the 10-label setting, where data distribution is more balanced, all methods converge more smoothly near 72\% accuracy, but AW-PSP remains slightly ahead overall, demonstrating the best and most stable final performance among the three approaches.
Overall, these results show two important findings. First, increasing the number of labels per client reduces the non-IID effect, leading to higher accuracy and more stable convergence. Second, and more importantly, AW-PSP consistently surpasses PSP and Oort under all tested label-distribution settings. Oort is implemented by  vendoring the \texttt{oort.py} module from \url{https://github.com/SymbioticLab/Oort} for the server to import and run that selector directly in the training pipeline for direct comparison against AW-PSP and Classic-PSP. This implementation includes Oort exploration/exploitation, pacer logic, blacklist/duration penalties, while the setup adds utility wiring and logs fairness/accuracy metrics.

\begin{figure*}[ht]
    \centering
    \subfigure[Accuracy test for 2 Labels/Client]{\includegraphics[width=0.27\textwidth]{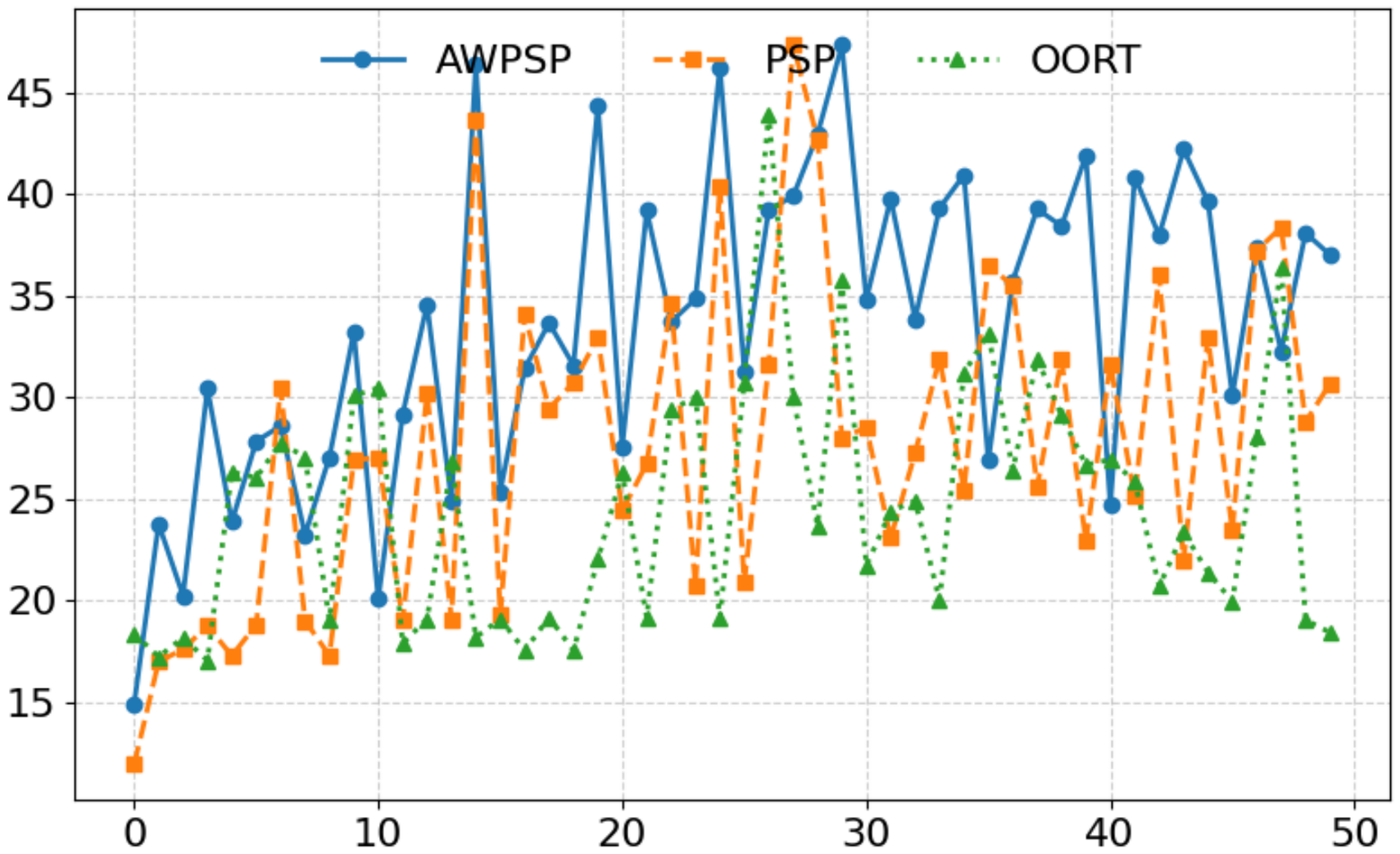}\label{accuracy-2labels}}
    \subfigure[Accuracy test for 5 Labels/Client]{\includegraphics[width=0.27\textwidth]{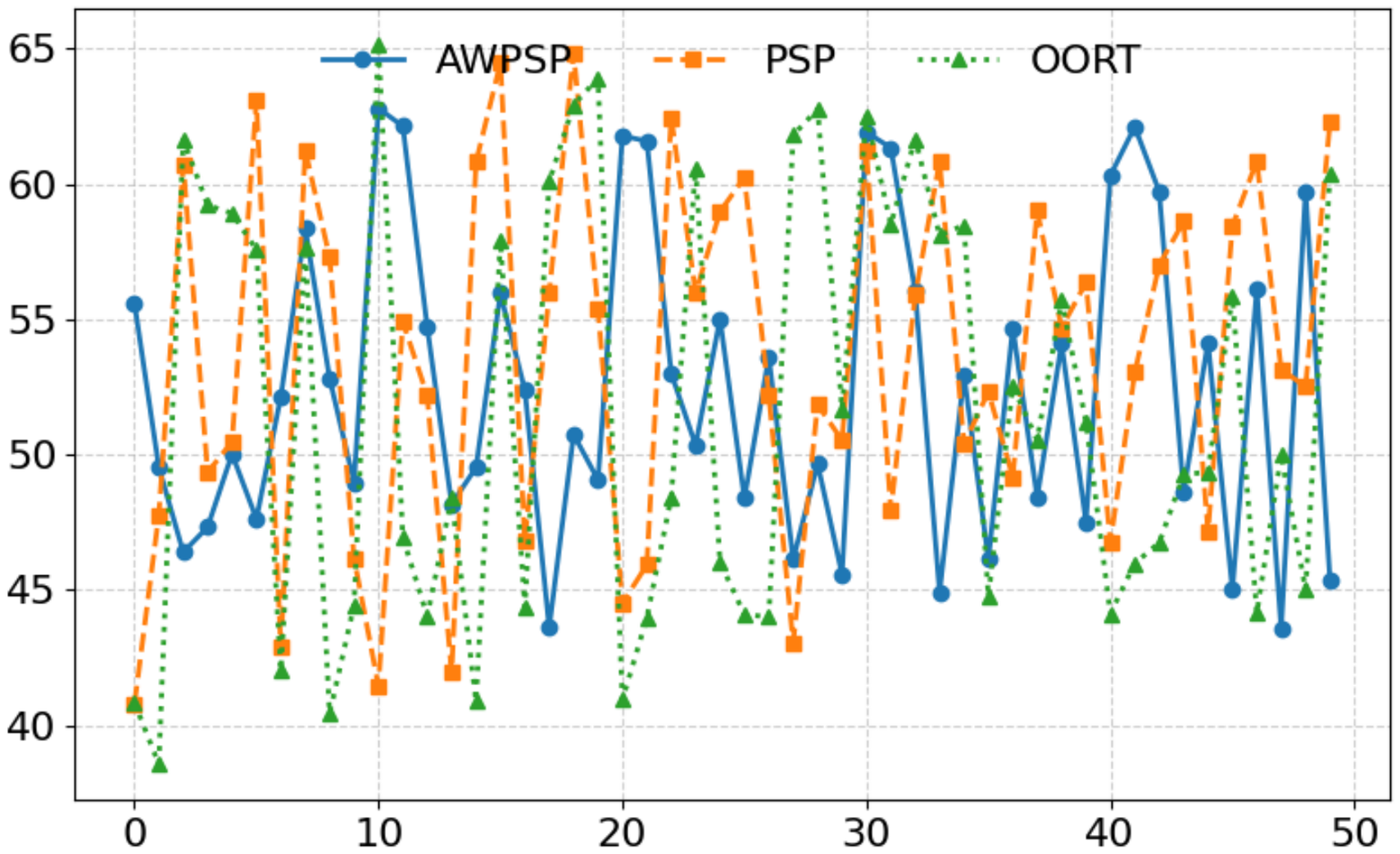}\label{accuracy-5labels}}
    \subfigure[Accuracy test for 10 Labels/Client]{\includegraphics[width=0.27\textwidth]{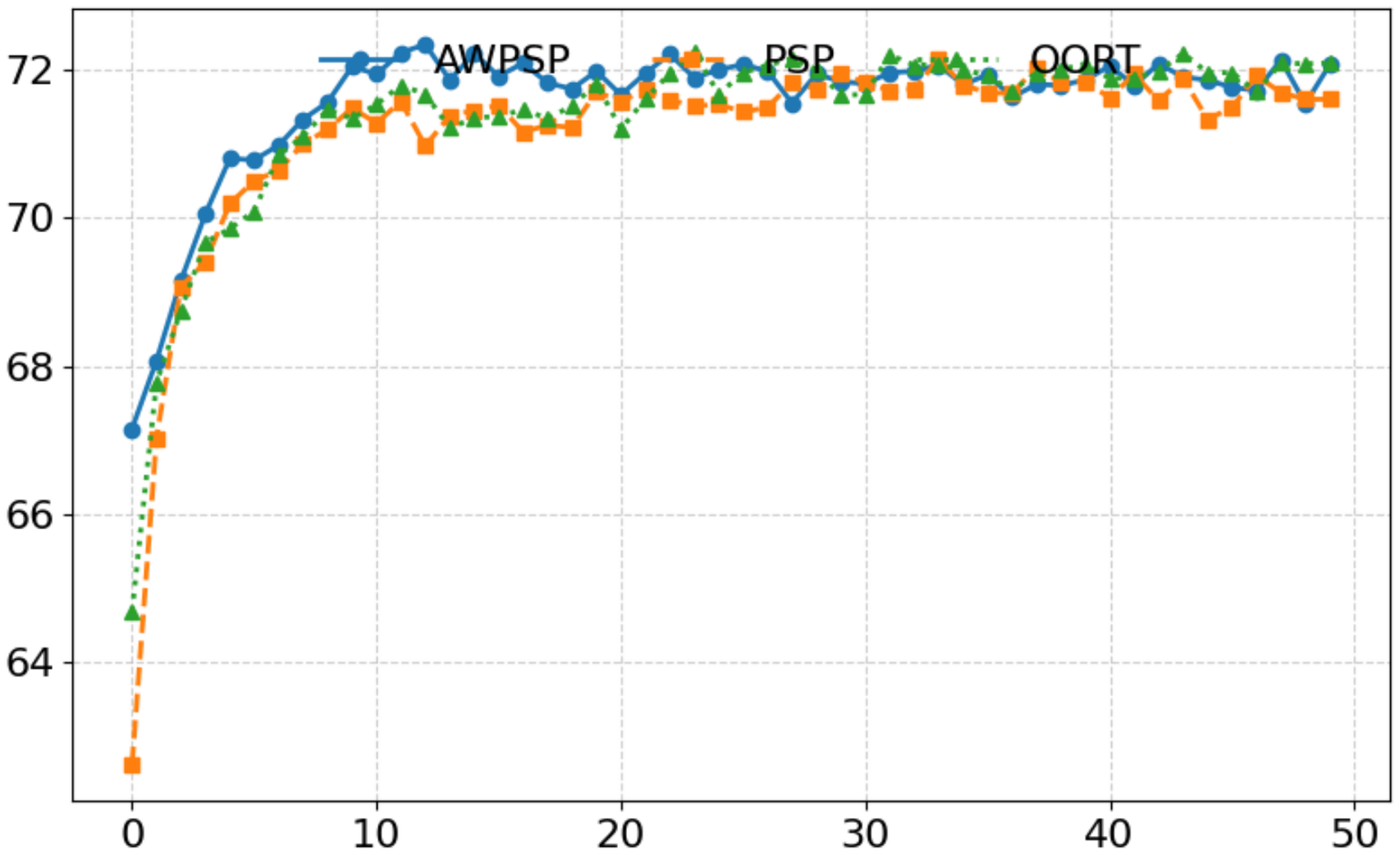}\label{Vaccuracy-5labels}}
      \vspace{-.3cm}
    \caption{Comparison of accuracy for AWPSP/PSP/Oort and different number of labels per client.}
    \label{fig:accuracy-comparison}
\end{figure*}

\subsubsection{Fairness variance metrics}

We define two complementary fairness variance metrics that capture fairness at different levels, see references ~\cite{pei2025fairfl,liu2025fedga}.
At round $t$, let $\mathcal{Y}$ denote the set of all classes, and let
$L_c(t) = \{\ell_{c,1}(t), \dots, \ell_{c,N_c}(t)\}$ be the set of per-sample
losses for class $c$ among the selected nodes $\mathcal{S}_t$.
Define the average loss for class $c$ as:
\begin{equation}
\ell_c(t) = \frac{1}{N_c} \sum_{i=1}^{N_c} \ell_{c,i}(t).
\end{equation}

\paragraph{(a) Average of Within Class Variance or Avg(class-var) or dispersion.}
This measures the heterogeneity of losses \emph{inside each class},
then averages across the observed classes:
\begin{equation}
    \sigma^2_{\text{within-class}}(t)
    = \frac{1}{|\mathcal{Y}_t|}
    \sum_{c \in \mathcal{Y}_t}
    \mathrm{Var}\!\big( L_c(t) \big),
\end{equation}
where $\mathcal{Y}_t \subseteq \mathcal{Y}$ is the set of classes covered
by $\mathcal{S}_t$ in round $t$.
High values indicate that samples of the same class, observed across different
nodes, exhibit highly inconsistent behavior.
AW-PSP might \emph{increase} this metric, because it selects a more diverse
set of devices per round, leading to greater variability.

\paragraph{(b) Variance of Class Means or Var(class-avg) or spread.}
This measures disparity in average performance \emph{across classes}:
\begin{equation}
    \sigma^2_{\text{across-classes}}(t)
    = \mathrm{Var}\Big( \{ \ell_c(t) : c \in \mathcal{Y}_t \} \Big).
\end{equation}
High values indicate that some classes systematically experience higher
losses than others, reflecting a fairness imbalance in training.
AW-PSP reduces this variance, since its availability- and freshness-aware
selection increases label coverage and balances updates across classes,
pulling class-level performance closer together.
Both client selection strategies, AW-PSP and Classic-PSP, are evaluated under identical experimental conditions, including the same model initialization, number of clients selected per round, and failure injection, while varying the degree of data heterogeneity using 2, 5, and 10 labels per client. Their performance is compared against Oort using two complementary class-level fairness metrics: \emph{Avg(class-var)} and \emph{Var(class-avg)}.

Figure~\ref{fig:fairness1-comparison} presents the results for the first fairness metric, \emph{Avg(class-var)}. Across all label configurations, AW-PSP consistently produces the lowest variance growth, indicating more stable within-class behaviour compared to both Classic-PSP and Oort. For 2 labels per client (Fig.~\ref{fairness1-2labels}), all three methods start with similar variance levels, but Oort quickly diverges and exhibits substantially larger increases in later rounds. Classic-PSP shows moderate growth, while AW-PSP maintains the lowest trajectory throughout most of the training process. For 5 labels per client (Fig.~\ref{fairness1-5labels}), Oort demonstrates pronounced instability with repeated spikes, whereas Classic-PSP and AW-PSP follow smoother trends, with AW-PSP remaining consistently lower. When the number of labels increases to 10 per client (Fig.~\ref{fairness1-10labels}), the separation between the three methods becomes clearer: Oort shows the steepest increase in variance, Classic-PSP remains intermediate, and AW-PSP continues to maintain the lowest overall within-class variability.

\begin{figure*}[ht]
    \centering
    \subfigure[Avg(class-var) for 2 Labels/Client]{\includegraphics[width=0.27\textwidth]{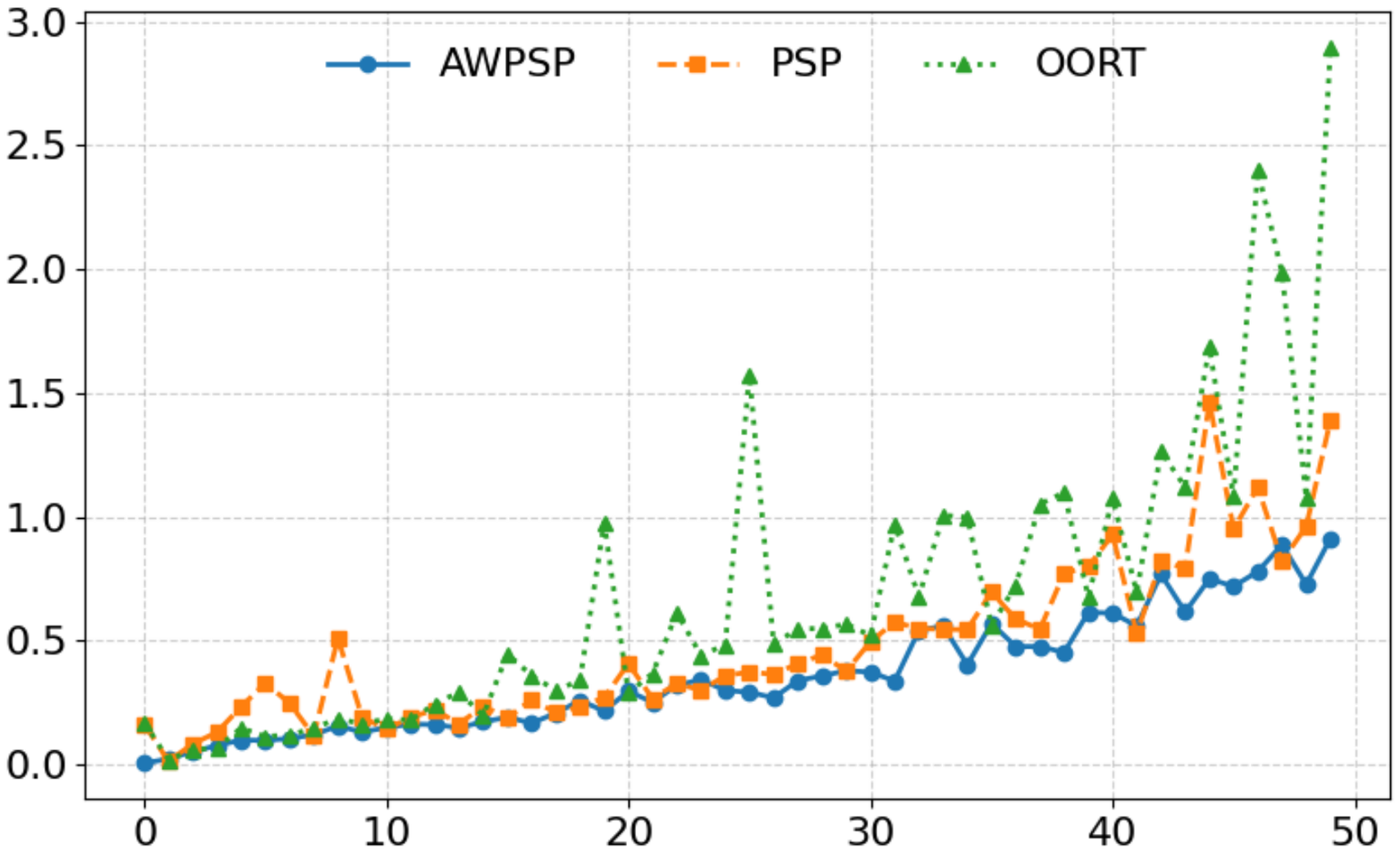}\label{fairness1-2labels}}
    \subfigure[Avg(class-var) for 5 Labels/Client]{\includegraphics[width=0.27\textwidth]{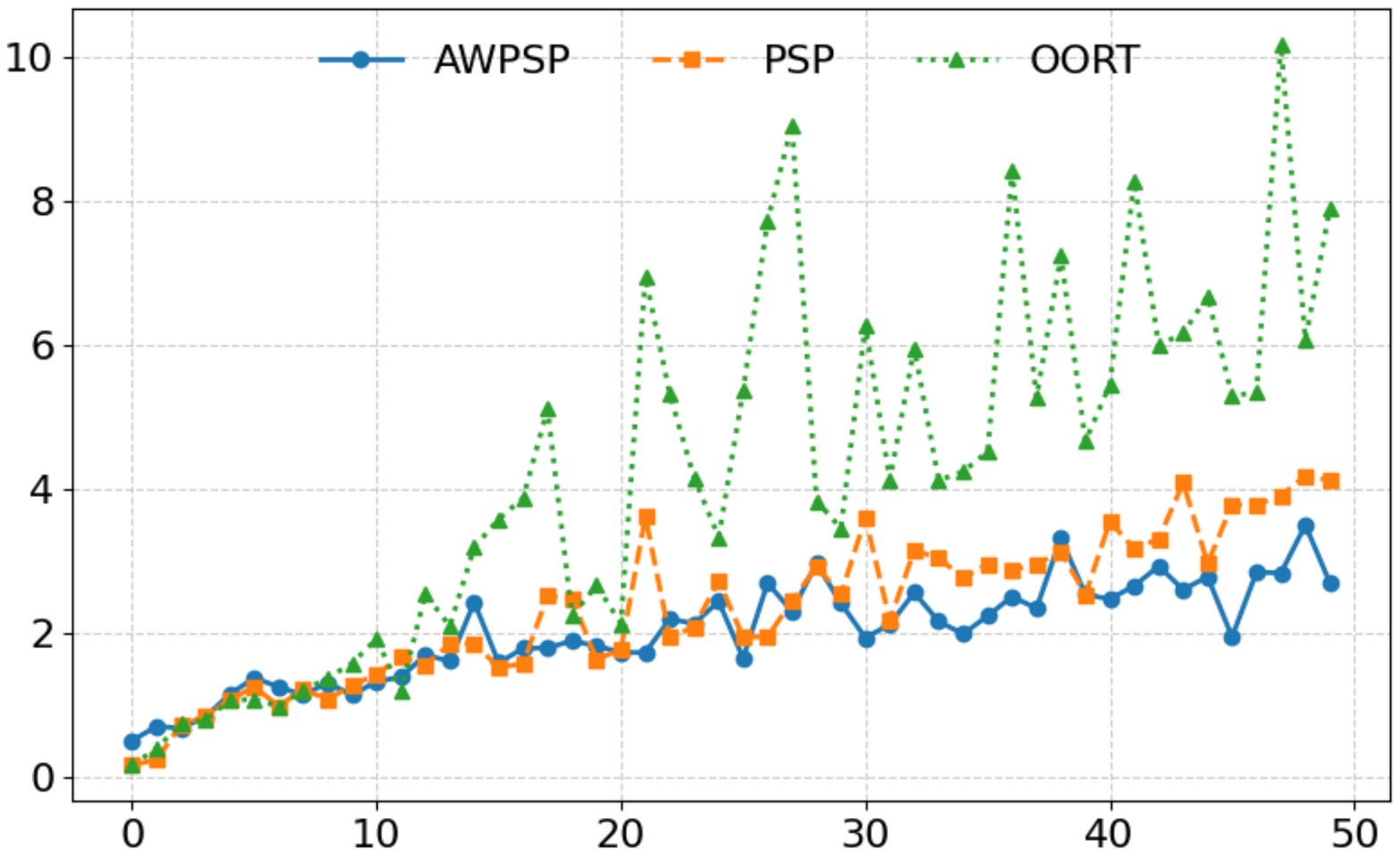}\label{fairness1-5labels}}
    \subfigure[Avg(class-var) for 10 Labels/Client]{\includegraphics[width=0.27\textwidth]{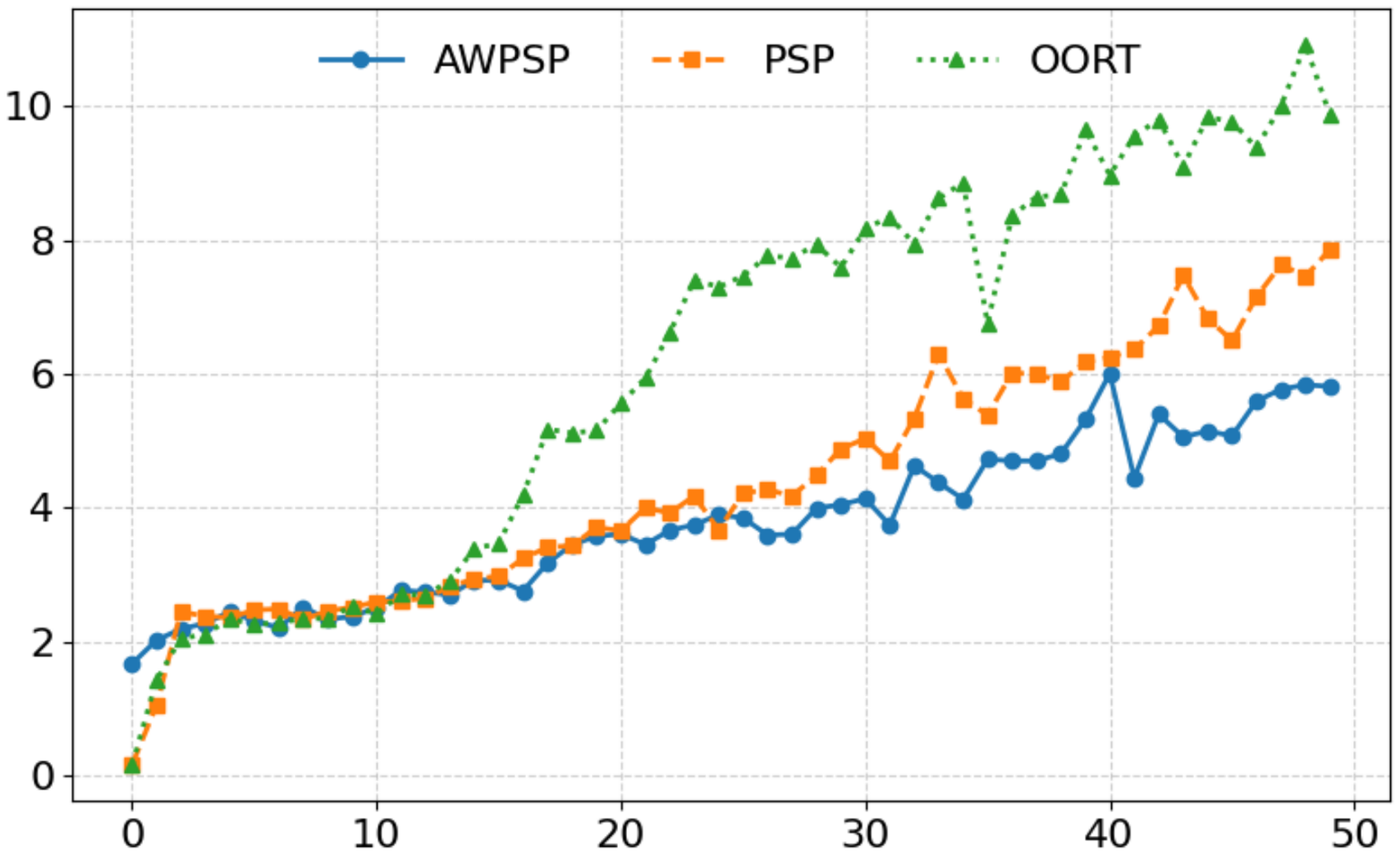}\label{fairness1-10labels}}
    \vspace{-.3cm}
    \caption{Comparison of our 1st fairness metric for AWPSP/PSP/Oort and different number of labels per client.}
    \label{fig:fairness1-comparison}
\end{figure*}

\begin{figure*}[ht]
    \centering
    \subfigure[Var(class-avg) for 2 Labels/Client]{\includegraphics[width=0.27\textwidth]{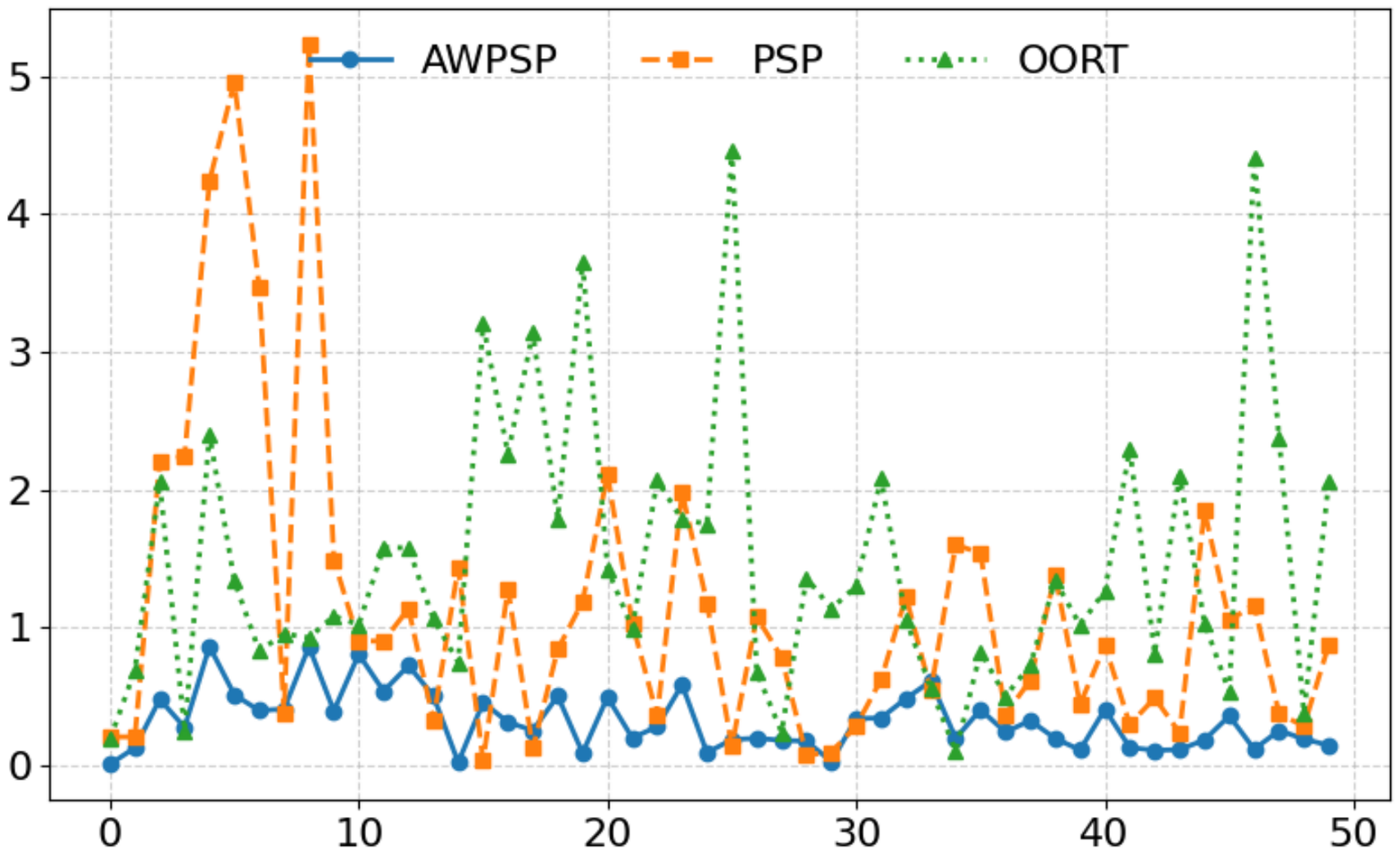}\label{fairness2-2labels}}
    \subfigure[Var(class-avg) for 5 Labels/Client]{\includegraphics[width=0.27\textwidth]{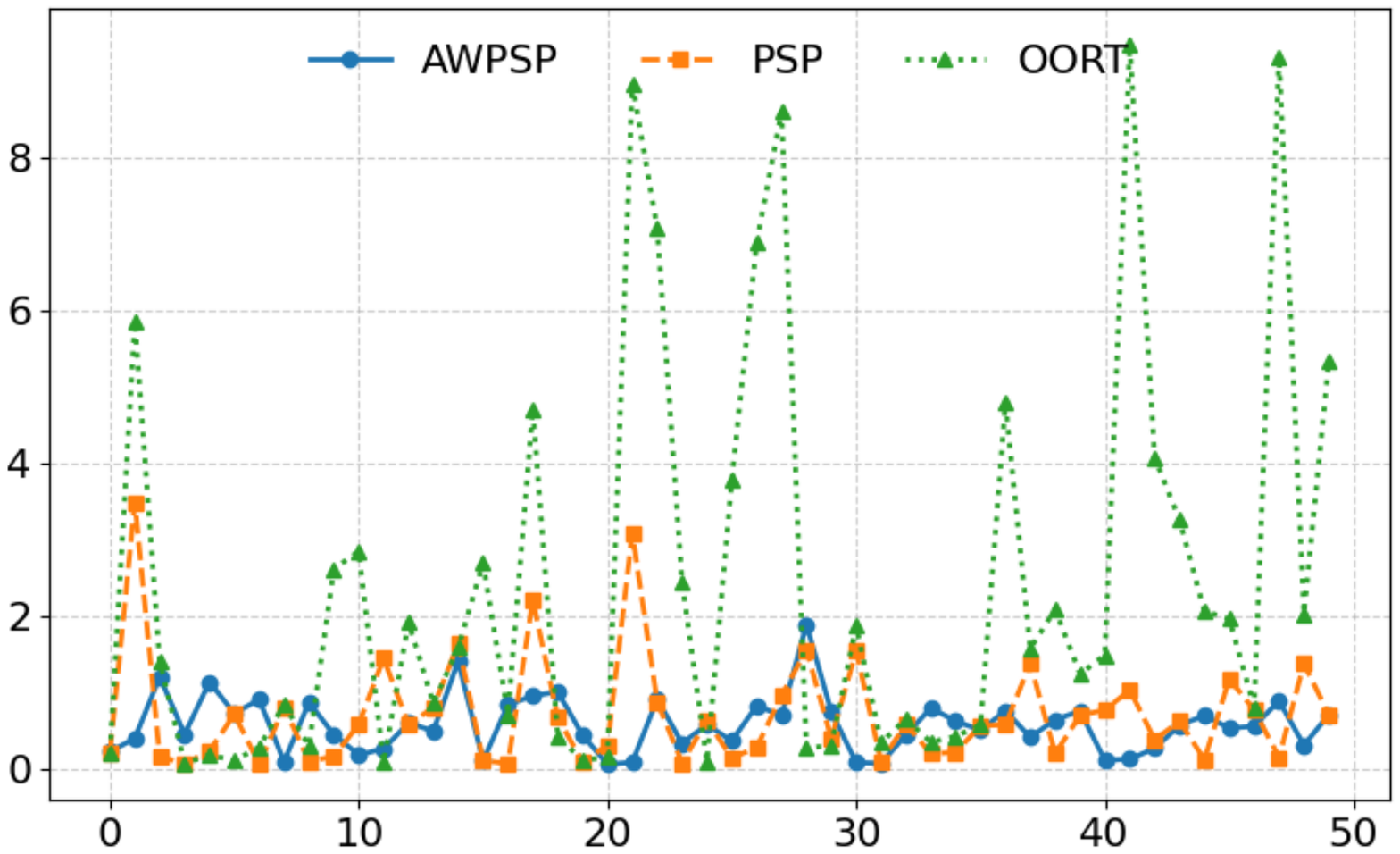}\label{fairness2-5labels}}
    \subfigure[Var(class-avg) for 10 Labels/Client]{\includegraphics[width=0.27\textwidth]{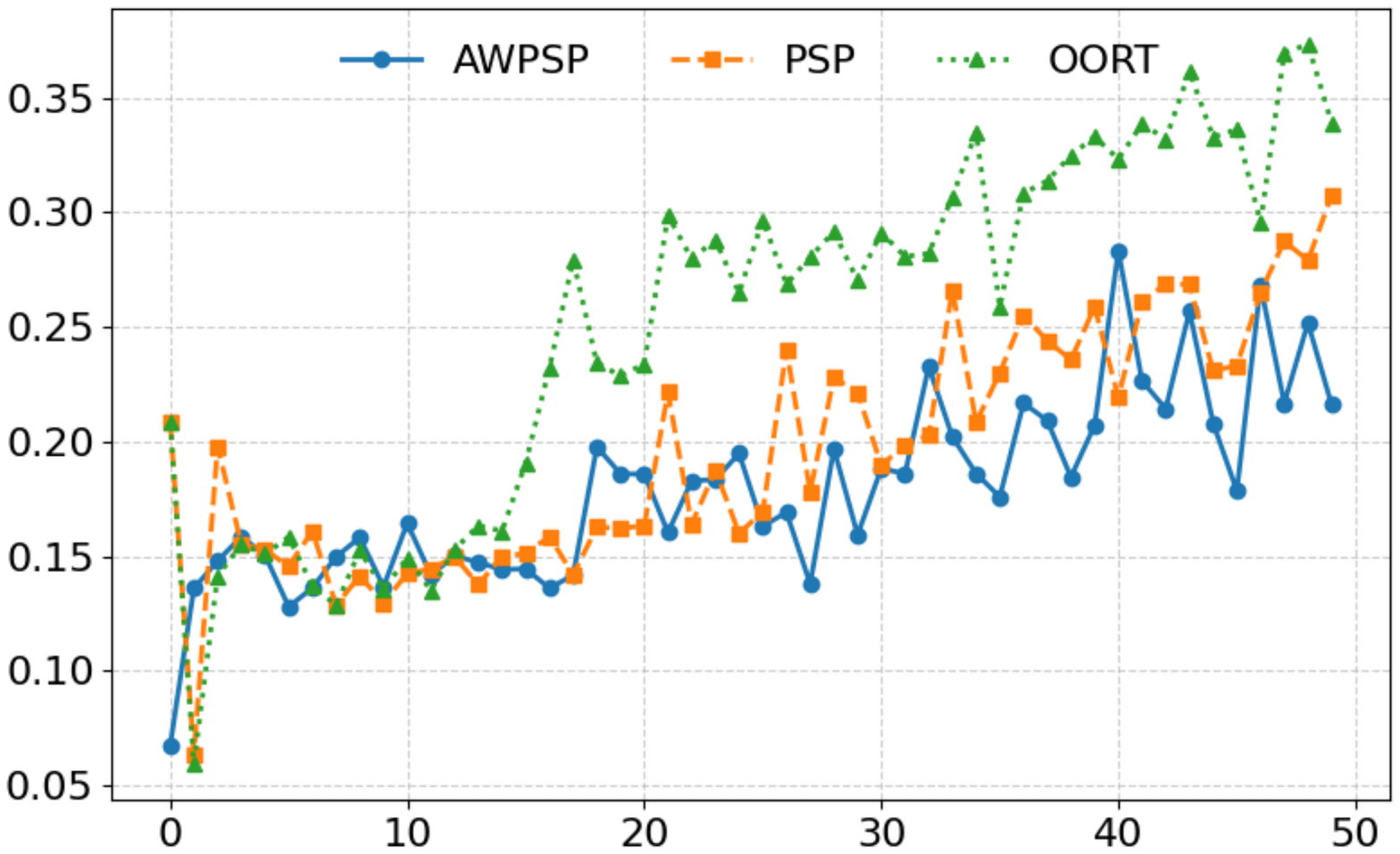}\label{fairness2-10labels}}
        \vspace{-.3cm}
    \caption{Comparison of our 2nd fairness metric for AWPSP/PSP/Oort and different number of labels per client.}
    \label{fig:fairness2-comparison}
\end{figure*}

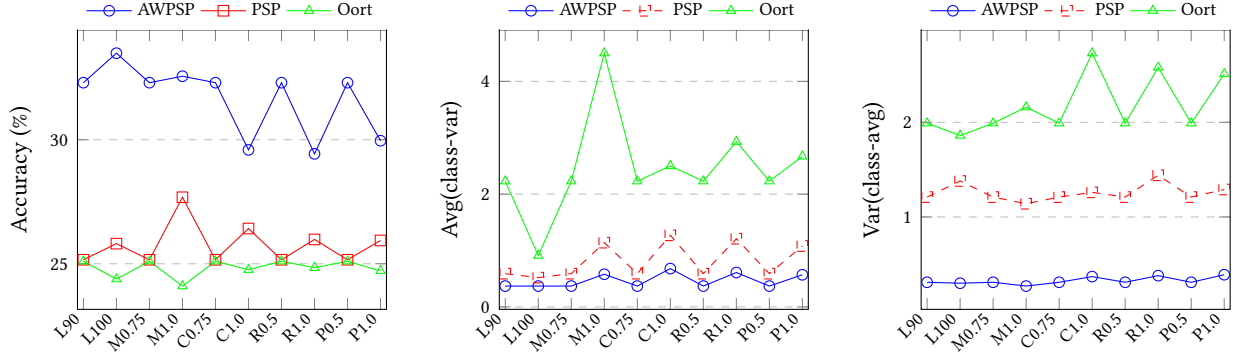
\begin{figure*}[t]
\centering
\begin{tikzpicture}
\begin{groupplot}[
    group style={group size=3 by 1, horizontal sep=1.5cm},
    width=0.32\textwidth,
    height=5.3cm,
    enlarge x limits=0.02,
    xtick=data,
    symbolic x coords={L90,L100,M0.75,M1.0,C0.75,C1.0,R0.5,R1.0,P0.5,P1.0},
    xticklabel style={rotate=45, anchor=east, font=\scriptsize},
    ylabel style={font=\small},
    xlabel style={font=\small},
    tick label style={font=\scriptsize},
    legend style={
        font=\scriptsize,
        at={(0.5,1)},
        anchor=south,
        legend columns=3,
        draw=none
    },
    ymajorgrids=true,
    grid style=dashed
]
\nextgroupplot[
    ylabel={Accuracy (\%)}
]
\addplot+[mark=o] coordinates {
    (L90,32.30) (L100,33.49)
    (M0.75,32.30) (M1.0,32.56)
    (C0.75,32.30) (C1.0,29.59)
    (R0.5,32.30) (R1.0,29.43)
    (P0.5,32.30) (P1.0,29.96)
};
\addplot+[mark=square] coordinates {
    (L90,25.16) (L100,25.81)
    (M0.75,25.16) (M1.0,27.68)
    (C0.75,25.16) (C1.0,26.42)
    (R0.5,25.16) (R1.0,25.98)
    (P0.5,25.16) (P1.0,25.94)
};
\addplot+[mark=triangle, color=green] coordinates {
    (L90,25.10) (L100,24.38)
    (M0.75,25.10) (M1.0,24.10)
    (C0.75,25.10) (C1.0,24.76)
    (R0.5,25.10) (R1.0,24.84)
    (P0.5,25.10) (P1.0,24.71)
};
\legend{AWPSP, PSP, Oort}

\nextgroupplot[
    ylabel={Avg(class-var)}
]
\addplot+[mark=o] coordinates {
    (L90,0.37) (L100,0.37)
    (M0.75,0.37) (M1.0,0.58)
    (C0.75,0.37) (C1.0,0.68)
    (R0.5,0.37) (R1.0,0.61)
    (P0.5,0.37) (P1.0,0.57)
};
\addplot+[mark=square,dashed] coordinates {
    (L90,0.59) (L100,0.52)
    (M0.75,0.59) (M1.0,1.14)
    (C0.75,0.59) (C1.0,1.27)
    (R0.5,0.59) (R1.0,1.21)
    (P0.5,0.59) (P1.0,1.08)
};
\addplot+[mark=triangle, color=green] coordinates {
    (L90,2.23) (L100,0.91)
    (M0.75,2.23) (M1.0,4.5)
    (C0.75,2.23) (C1.0,2.50)
    (R0.5,2.23) (R1.0,2.93)
    (P0.5,2.23) (P1.0,2.67)
};
\legend{AWPSP, PSP, Oort}
\nextgroupplot[
    ylabel={Var(class-avg)}
]
\addplot+[mark=o] coordinates {
    (L90,0.31) (L100,0.30)
    (M0.75,0.31) (M1.0,0.27)
    (C0.75,0.31) (C1.0,0.37)
    (R0.5,0.31) (R1.0,0.38)
    (P0.5,0.31) (P1.0,0.39)
};
\addplot+[mark=square,dashed] coordinates {
    (L90,1.21) (L100,1.38)
    (M0.75,1.21) (M1.0,1.14)
    (C0.75,1.21) (C1.0,1.26)
    (R0.5,1.21) (R1.0,1.44)
    (P0.5,1.21) (P1.0,1.29)
};
\addplot+[mark=triangle, color=green] coordinates {
    (L90,1.99) (L100,1.86)
    (M0.75,1.99) (M1.0,2.16)
    (C0.75,1.99) (C1.0,2.73)
    (R0.5,1.99) (R1.0,2.58)
    (P0.5,1.99) (P1.0,2.51)
};
\legend{AWPSP, PSP, Oort}
\end{groupplot}
\end{tikzpicture}
    \vspace{-.3cm}
\caption{Sensitivity of AWPSP, PSP, and Oort to parameter perturbations across five parameter groups: latency threshold (L), communication weight (M), computation weight (C), recovery probability (R), and correlation penalty (P).}
\label{fig:sensitivity_main_effects}
\end{figure*}

Figure~\ref{fig:fairness2-comparison} shows the second fairness metric, \emph{Var(class-avg)}, which directly reflects imbalance across class-level mean losses. The trends demonstrate a consistent ordering across all label settings. For 2 labels per client (Fig.~\ref{fairness2-2labels}), Classic-PSP exhibits noticeable early-round spikes, indicating unstable class-level fairness, while Oort shows large oscillations across many rounds. AW-PSP remains significantly smoother and lower overall. For 5 labels per client (Fig.~\ref{fairness2-5labels}), Oort again displays extreme peaks, indicating large fluctuations in class-level performance, whereas Classic-PSP shows moderate variability. AW-PSP maintains the lowest and most stable variance trajectory. With 10 labels per client (Fig.~\ref{fairness2-10labels}), all methods show gradual variance growth as training progresses, but AW-PSP consistently stays below Classic-PSP, while Oort exhibits the highest and fastest increase.

Figure~\ref{fig:sensitivity_main_effects} shows that AWPSP consistently achieves the highest accuracy while maintaining significantly lower variance across both fairness metrics compared to PSP and Oort. Notably, performance remains stable under latency, communication, and computation changes, indicating robustness to these parameters. In contrast, recovery and correlation perturbations introduce more pronounced effects, particularly increasing variance for PSP and Oort, while AWPSP degrades more gracefully.

\subsubsection{Other fairness measures}

We evaluate AW-PSP against Classic-PSP and Oort along three complementary
dimensions capturing class coverage, distributional balance, and participation fairness.
At round $t$, let $\mathcal{Y}$ denote the set of all classes.
For each class $c \in \mathcal{Y}$, let $L_c(t) = \{\ell_{c,1}(t), \dots, \ell_{c,N_c}(t)\}$
be the set of per-sample losses associated with class $c$ among the selected clients $\mathcal{S}_t$,
where $N_c$ is the number of samples of class $c$ observed at round $t$.
We define the empirical class distribution $P_t(c)
=
\frac{N_c}{\sum_{c' \in \mathcal{Y}} N_{c'}}$. Let the uniform distribution over classes be $U(c) = \frac{1}{|\mathcal{Y}|}$. See references ~\cite{liu2025fedga,shi2021towards}.

\begin{enumerate}

\item \textbf{KL divergence (distributional imbalance).}

We quantify deviation from uniform class participation via:
\begin{equation}
D_{\mathrm{KL}}(P_t \parallel U)
=
\sum_{c \in \mathcal{Y}}
P_t(c)
\log
\frac{P_t(c)}{U(c)}
\end{equation}

This metric captures how skewed the selected data distribution is.
Lower values indicate better class balance and improved representativeness.

\item \textbf{Unseen classes (coverage gap).}

We measure the number of classes that are completely absent in the selected set. This metric directly quantifies failure of coverage.
Lower values indicate that more classes are represented in each round.

\begin{equation}
U_{\mathrm{miss}}(t)
=
\left|
\left\{
c \in \mathcal{Y}
\;:\;
N_c = 0
\right\}
\right|
\end{equation}

\item \textbf{Gini coefficient (participation inequality).}

Let $n_i(t)$ denote the cumulative number of times client $i$ has been selected up to round $t$.
Define the Gini coefficient below, which measures inequality in participation across clients over time.
Lower values indicate more uniform participation and improved fairness
\begin{equation}
G(t)
=
\frac{
\sum_{i=1}^{N} \sum_{j=1}^{N}
\left| n_i(t) - n_j(t) \right|
}{
2N \sum_{i=1}^{N} n_i(t)
}
\end{equation}

\end{enumerate}

\begin{figure*}[ht]
    \centering
    \subfigure[KL divergence comparison]{\includegraphics[width=0.27\textwidth]{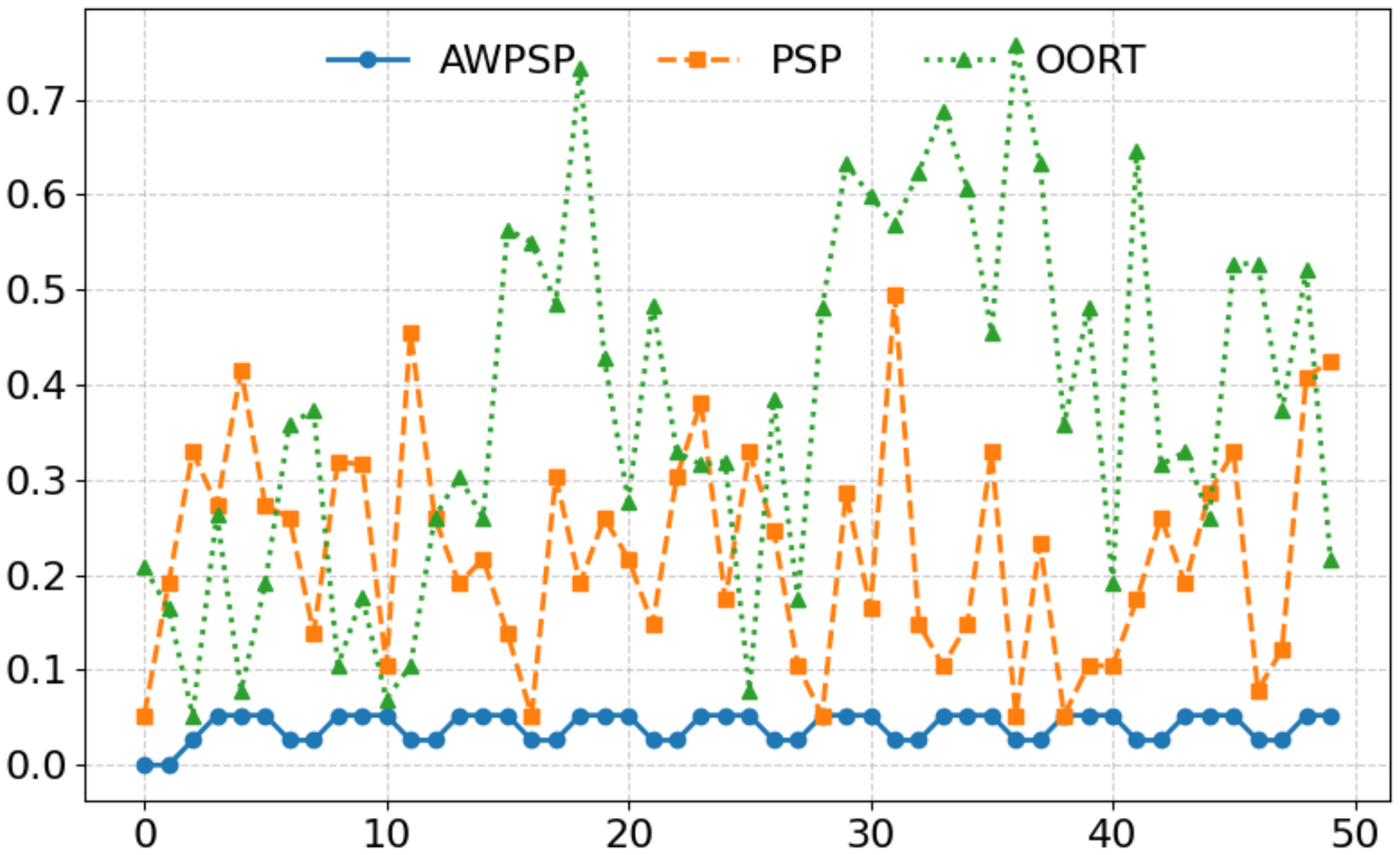}\label{fairness3-KL}}
    \subfigure[Unseen classes comparison]{\includegraphics[width=0.27\textwidth]{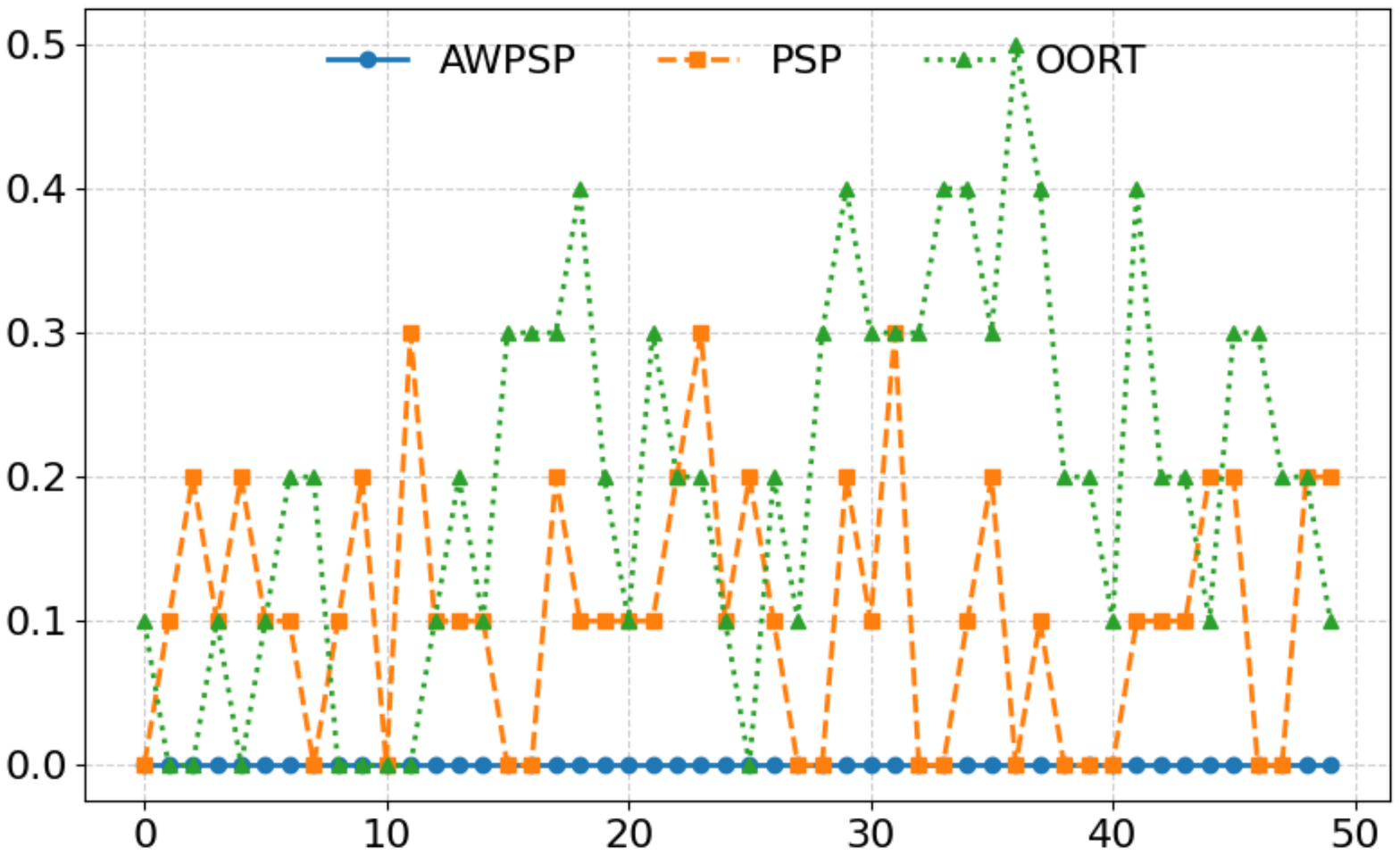}\label{fairness3-Unseen}}
    \subfigure[Gini coeffiicient comparison]{\includegraphics[width=0.27\textwidth]{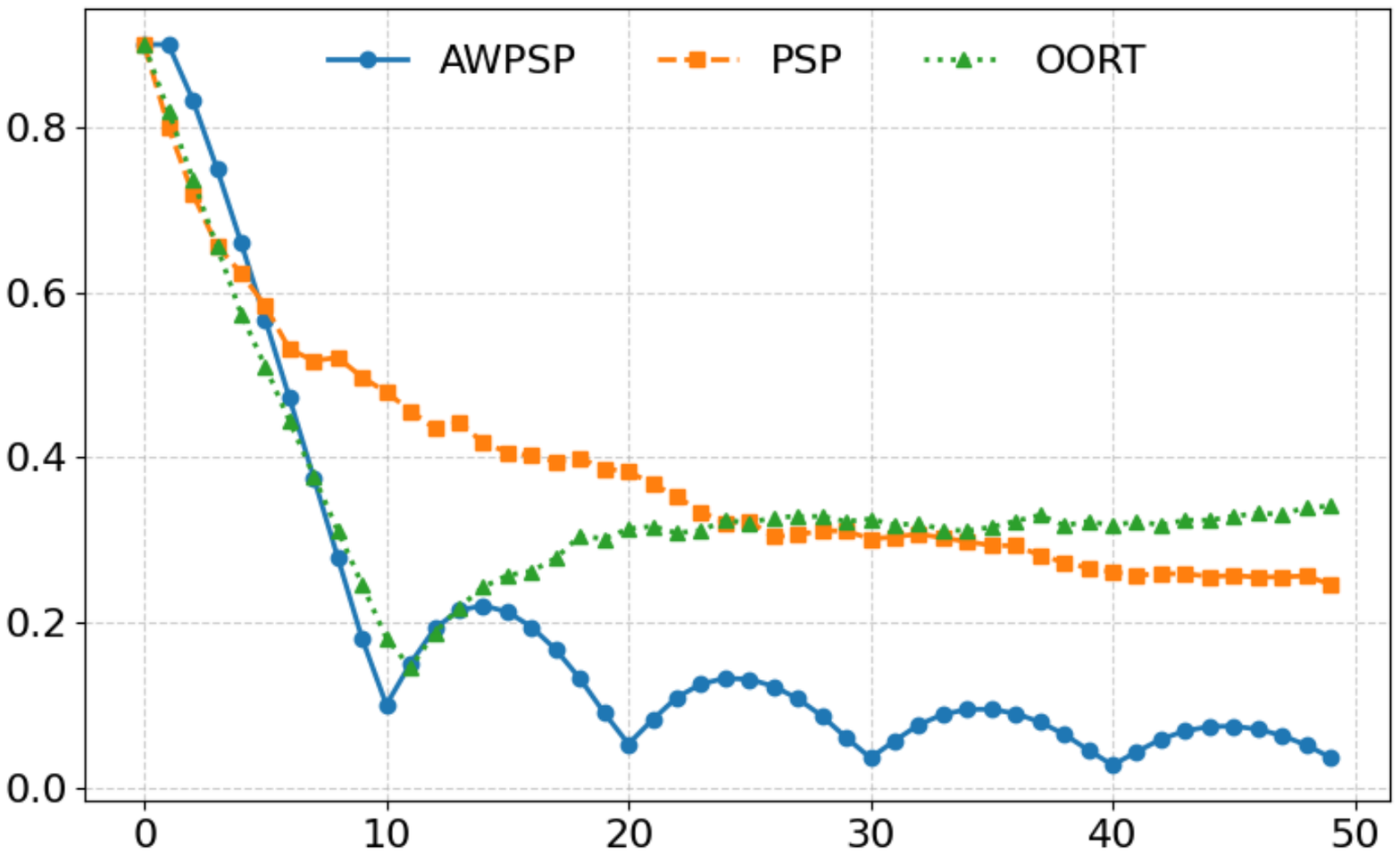}\label{fairness3-Gini}}
      \vspace{-.3cm}
    \caption{Comparison of AWPSP/PSP/Oort in terms of KL divergence, Unseen classes (coverage gap), and Gini coefficient.}
    \label{fig:fairness3-comparison}
\end{figure*}

Figure~\ref{fig:fairness3-comparison} demonstrates the comparison between AW-PSP, Classic-PSP, and Oort across the three fairness-related metrics. Together, these metrics provide strong evidence that AW-PSP consistently improves fairness of client selection, leading to more balanced and representative training data across rounds.
Looking first at KL divergence, AW-PSP maintains consistently low values throughout the training process, typically remaining close to zero, indicating that the class distribution of selected clients closely matches the global distribution. In contrast, Classic-PSP shows larger fluctuations, while Oort frequently produces significantly higher KL values, often exceeding 0.5. This indicates that AW-PSP more effectively mitigates distribution skew, ensuring that no subset of labels dominates the training process.
Next, examining the number of unseen classes, AW-PSP maintains values near zero across nearly all rounds, meaning that almost all classes are represented in each training round. Classic-PSP occasionally fails to cover some classes, while Oort exhibits the largest variation, with unseen-class ratios reaching as high as 0.5 in some rounds. This demonstrates that AW-PSP achieves significantly better label coverage.
Finally, the Gini coefficient, which measures inequality in class representation, further confirms the advantage of AW-PSP. AW-PSP rapidly reduces the Gini value and stabilizes at very low levels (close to 0.05), indicating highly balanced participation of different classes. Classic-PSP converges more slowly and stabilizes at higher inequality levels (around 0.25), while Oort remains consistently higher (around 0.30–0.35), suggesting greater imbalance in selected client data.

\subsubsection{Scalability}

\begin{figure*}[ht]
    \centering
    \subfigure[AW-PSP Accuracy scalability]{\includegraphics[width=0.27\textwidth]{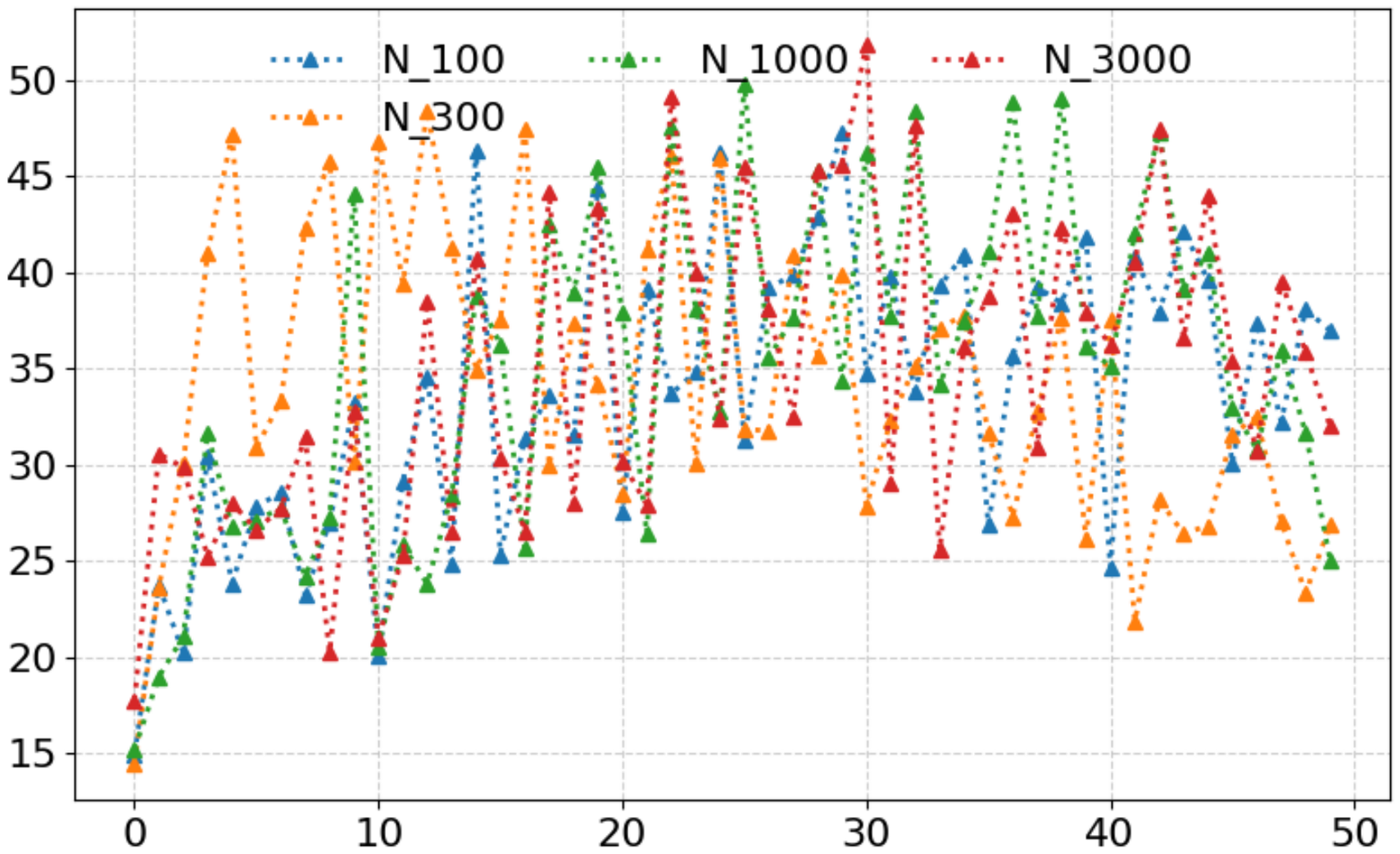}\label{Accuracy_scalability}}
    \subfigure[AW-PSP Avg(class-var) scalability]{\includegraphics[width=0.27\textwidth]{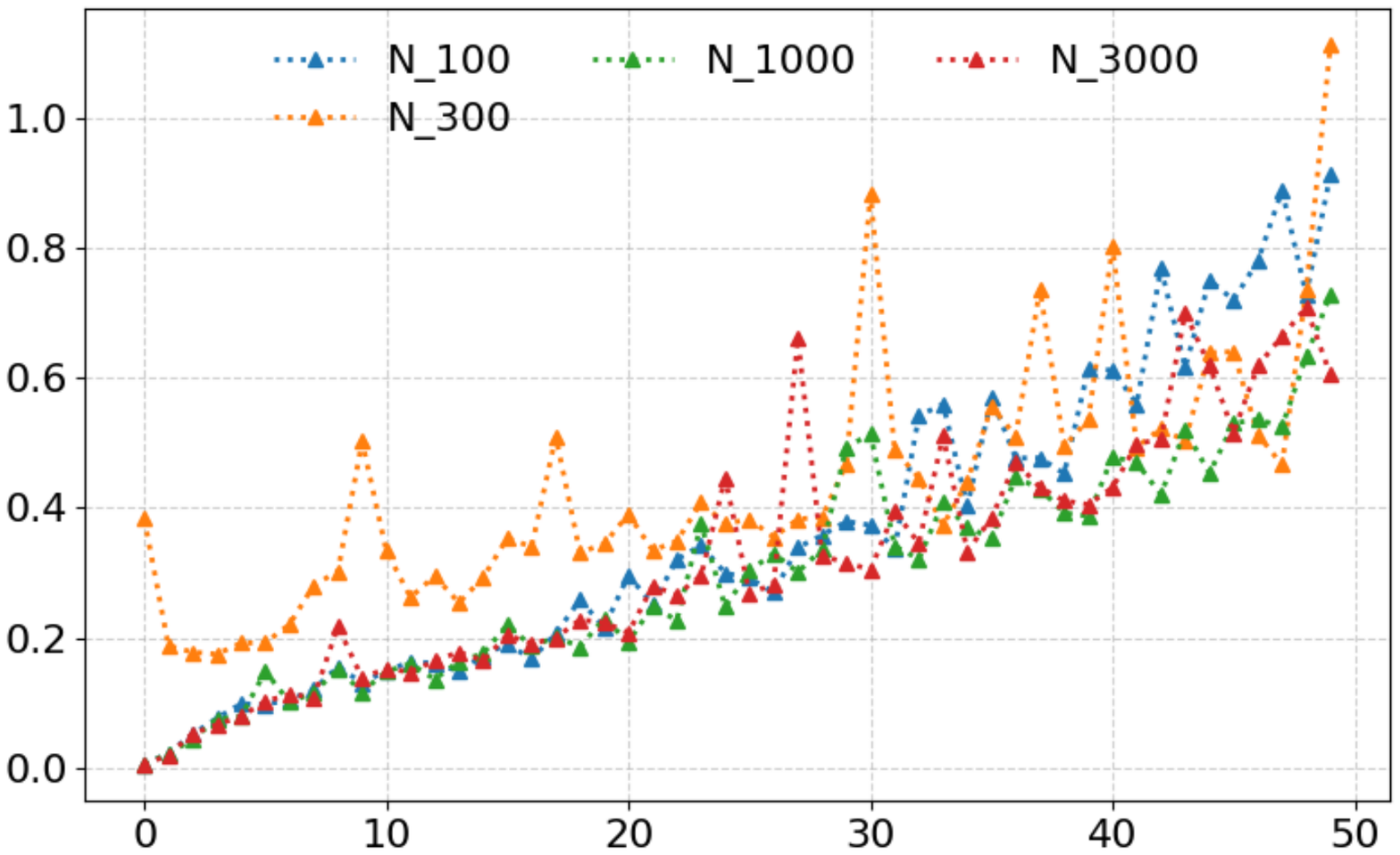}\label{Fairness1_scalability}}
    \subfigure[AW-PSP Var(class-avg) scalability]{\includegraphics[width=0.27\textwidth]{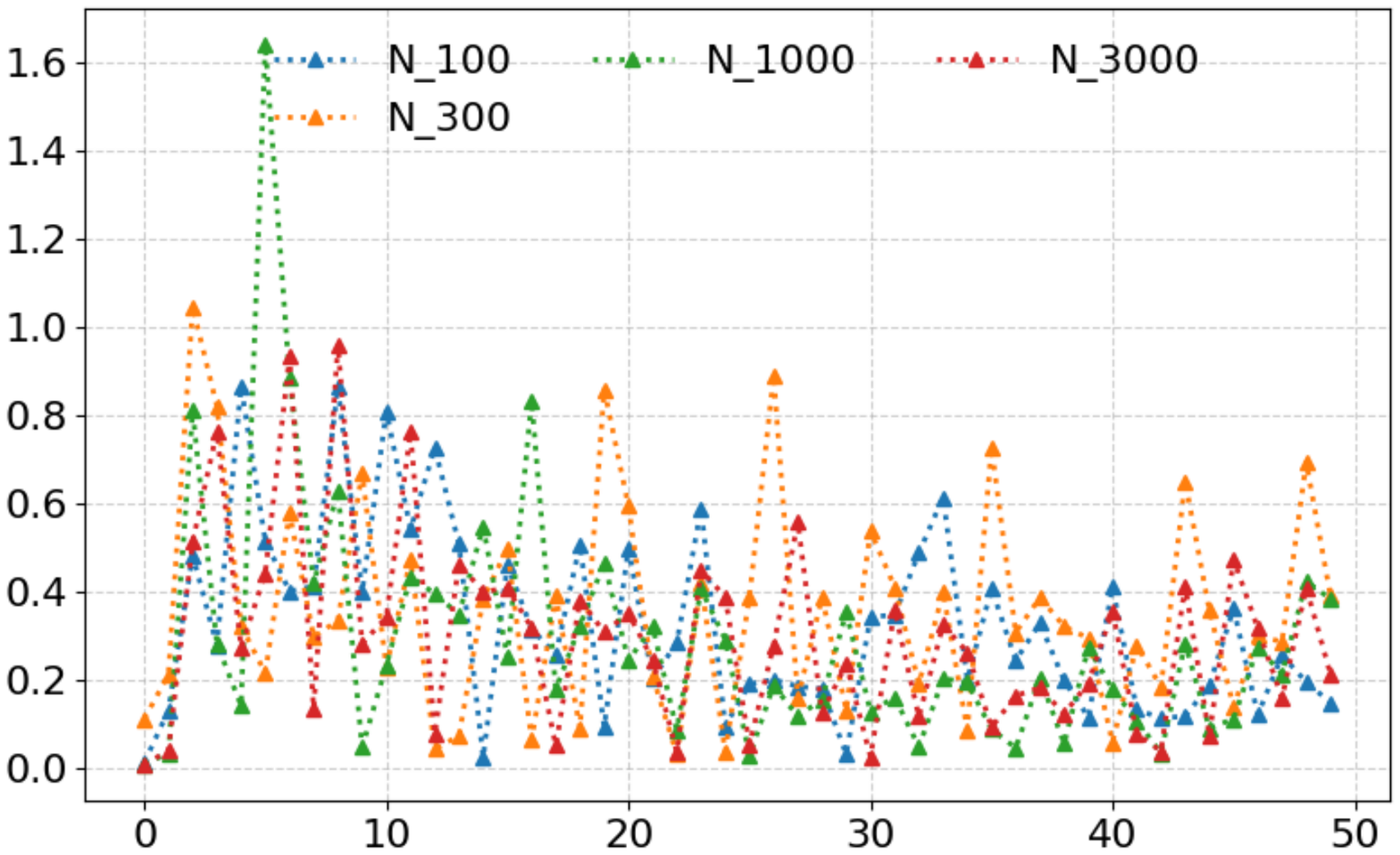}\label{Fairness2_scalability}}
      \vspace{-.3cm}
    \caption{Scalability in terms of accuracy, fairness metric 1 and fairness metric 2 for [100,300,1000,3000] clients.}
    \label{fig:AW-PSP Scalibility}
\end{figure*}

\begin{table*}[t]
\centering
\footnotesize
\caption{Impact of Correlation Noise (2 Labels per Client)}
\label{tab:correlation_noise}
\begin{tabular}{llccccccc}
\hline
Metric & Method & c0 & c10 & c20 & c40 & $\Delta_{c10}$ & $\Delta_{c20}$ & $\Delta_{c40}$ \\
\hline
Accuracy (\%)
& AWPSP & 33.75 & 33.32 & 31.42 & 29.78 & -1.3\% & -6.9\% & -11.8\% \\
& PSP   & 27.85 & 26.77 & 27.53 & 26.84 & -3.9\% & -1.1\% & -3.6\% \\
& Oort  & 24.65 & 24.02 & 23.82 & 24.92 & -2.6\% & -3.4\% & +1.1\% \\
\hline
Avg(within-class)
& AWPSP & 0.36 & 0.38 & 0.58 & 0.58 & +5.6\% & +61.1\% & +61.1\% \\
& PSP   & 0.47 & 0.53 & 0.98 & 1.10 & +12.8\% & +108.5\% & +134.0\% \\
& Oort  & 0.68 & 1.22 & 1.69 & 3.08 & +79.4\% & +148.5\% & +352.9\% \\
\hline
Var(class-avg)
& AWPSP & 0.33 & 0.34 & 0.34 & 0.38 & +3.0\% & +3.0\% & +15.2\% \\
& PSP   & 1.16 & 1.22 & 1.35 & 1.29 & +5.2\% & +16.4\% & +11.2\% \\
& Oort  & 1.47 & 1.85 & 2.02 & 3.09 & +25.9\% & +37.4\% & +110.2\% \\

\hline
\end{tabular}
\end{table*}

To evaluate the scalability of AW-PSP under highly heterogeneous conditions, we analyze its behavior when the number of clients increases from 100 to 300, 1000, and 3000, while each client contains only 2 labels. This represents an extremely non-IID scenario, where each individual client provides very limited class diversity, making fair and representative client selection particularly challenging. In such conditions, we expect a robust selection strategy to maintain stable accuracy while preserving balanced class representation as the client pool grows.
Looking first at model accuracy, Figure~\ref{Accuracy_scalability} shows that AW-PSP maintains relatively stable performance as the number of clients increases. While the absolute accuracy remains lower than in less heterogeneous settings due to the extreme label skew, the curves for larger populations (1000 and 3000 clients) remain comparable to those observed with 100 and 300 clients. This indicates that AW-PSP successfully scales without significant degradation in predictive performance. The stability of the accuracy curves suggests that increasing the client pool does not introduce additional bias, demonstrating that the selection mechanism continues to identify useful and representative participants even when label diversity per client is highly constrained.
Next, we examine Average of Class Variance, which measures variability within each class across participating clients. As shown in Figure~\ref{Fairness1_scalability}, the average within-class variance increases gradually as the number of clients grows. This trend is expected: with more clients available, the sampling process captures a wider range of feature distributions for each class. Importantly, the growth is smooth and controlled rather than abrupt, indicating that AW-PSP maintains consistent representation of each class despite the limited label availability per client. This behavior suggests that AW-PSP effectively balances exploration of diverse clients while preserving stability in class-specific learning.
Finally, we analyze the Variance of Class Means; Figure~\ref{Fairness2_scalability} shows that it remains relatively low and stable across different client scales, with only moderate fluctuations as the number of clients increases. This indicates that AW-PSP continues to distribute participation opportunities evenly across classes even when the candidate pool is large.

\subsubsection{Ablation (Correlation Impact)}

Correlation noise (i.e., failure correlation among clients) introduces dependencies among selected participants and affects all methods; however, its impact differs significantly across selection strategies. As shown in Table~\ref{tab:correlation_noise}, AW-PSP exhibits a gradual and consistent degradation in accuracy as correlation increases, with only $-1.3\%$ at $c10$, $-6.9\%$ at $c20$, and $-11.8\%$ at $c40$. In contrast, PSP shows smaller but irregular fluctuations, while Oort remains unstable and does not follow a consistent degradation trend, even slightly improving at high correlation.
The differences become substantially more pronounced when examining fairness metrics. For \textit{Avg(within-class)} (intra-class variability), AW-PSP shows a controlled increase, rising moderately from $0.36$ to $0.58$ ($+61.1\%$), indicating bounded degradation in within-class consistency. In contrast, PSP exhibits significantly larger growth ($+134.0\%$), while Oort demonstrates extreme sensitivity, with intra-class variability increasing sharply from $0.68$ to $3.08$ ($+352.9\%$).
A similar pattern is observed for \textit{Var(class-avg)} (inter-class fairness). AW-PSP remains relatively stable, with only a modest increase ($+3.0\%$ to $+15.2\%$), indicating robustness in maintaining balanced class-level performance under correlated failures. PSP shows moderate degradation, while Oort again exhibits substantial imbalance, with variance increasing by up to $+110.2\%$ at $c40$.

\section{Related Work}
\label{s:related}

Prior work in distributed and federated learning has largely focused on stragglers, non-IID data, and heterogeneity, but has not treated correlated availabilities and co-failures as first-class challenges. Classic scheduling approaches such as PSP ~\cite{wang2017probabilistic}, SSP ~\cite{ho2013more}, and gradient coding ~\cite{tandon2017gradient} mitigate slow or failing workers, but assume independent failures and do not address systematic, infrastructure-driven unavailability. In the FL domain, methods like FedAvg ~\cite{mcmahan2017communication}, FedProx~\cite{li2020federated}, FedGS ~\cite{FedGS}, and Oort ~\cite{Oort} improve efficiency or selection under heterogeneous data, but they optimize accuracy–latency trade-offs without considering fairness imbalance or correlated outages. Recent work such as Sageflow ~\cite{sageflow2021} and FLuID ~\cite{fluid2023} handle stragglers and adversaries, while FedScale ~\cite{FedScale} provides large-scale benchmarking of heterogeneity and resource constraints, yet none of these lines systematically model failure correlations or fairness starvation as we do. Even availability-oriented studies such as CA-Fed ~\cite{CA-Fed} or F3AST ~\cite{F3AST} restrict attention to independent client churn, ignoring the clustered or synchronized unavailability patterns that naturally arise in real systems. AW-PSP complements existing frameworks but introduces a distinct failure- and fairness-aware probabilistic sampling that remains effective even when availability shocks are correlated across groups.

\section{Conclusion}
\label{s:conclusions}

This work introduces AW-PSP, a client selection framework that integrates availability prediction, correlated-failure modeling, and DHT-based proximity into FL scheduling. Unlike prior approaches, AW-PSP explicitly models data-availability co-correlation, capturing both behavioral similarity in availability traces and structural co-failure risks arising from shared infrastructure. By incorporating correlation-aware recovery probabilities into a health score defined as \textit{availability(t) $\times$ freshness (t)}, AW-PSP proactively diversifies client selection while still prioritizing reliable nodes, thereby reducing the likelihood of synchronized dropouts.
Extensive trace-driven evaluation demonstrates that AW-PSP consistently outperforms Classic-PSP and Oort across multiple dimensions. First, AW-PSP improves fairness by maintaining lower Var(class-avg) imbalance, near-zero unseen classes, and reduced KL divergence, indicating more balanced and representative participation across labels. Second, AW-PSP preserves class coverage and stabilizes Avg(class-var), leading to more consistent convergence even under highly non-IID settings (2 labels per client). Third, under correlated failures, AW-PSP shows significantly higher robustness where impact of correlation noise on accuracy and fairness is substantially smaller compared to PSP with larger degradation.
Furthermore, scalability experiments with up to 3000 clients show that AW-PSP maintains stable accuracy and fairness.

\bibliographystyle{IEEEtran}
\bibliography{references}

\newpage
\appendix

\section{Theoretical Justification and Guarantees for the Model in Section 3}

This appendix rigorously validates the availability construction
introduced in Section~3, specifically Eq.~(8)--(10).
We derive these expressions from an explicit generative model of
client participation and show that the proposed formulation
is statistically optimal or a controlled approximation under
clearly stated assumptions.

\subsection{Generative Model of Round Success}

For client $i$ at round $t$, define the binary success variable:

\begin{equation}
Y_i(t) \in \{0,1\}
\end{equation}

where $Y_i(t)=1$ indicates successful completion of the FL round.
Let the client resource vector be:

\begin{equation}
R_i(t) =
\begin{pmatrix}
C_i(t) \\
B_i(t)
\end{pmatrix}
\end{equation}
where $C_i(t)$ and $B_i(t)$ denote available computation and communication resources.
Let the minimum required resources for participation be
\begin{equation}
R_{\min} =
\begin{pmatrix}
C_{\min} \\
B_{\min}
\end{pmatrix}
\end{equation}

A node is resource-feasible if
\begin{equation}
Y_i(t) = \mathbf{1}\{C_i(t) \ge C_{\min},\; B_i(t) \ge B_{\min}\}
\end{equation}

Computation and communication delays are functions of the available resources,
i.e., $T_i^{\text{comp}} = f(C_i(t))$ and $T_i^{\text{comm}} = g(B_i(t))$.
Node availability is then defined as the probability of completing within a deadline:
\begin{equation}
a_i(t) = P\big(T_i^{\text{comp}}(t) + T_i^{\text{comm}}(t) \leq T_{\max}\big)
\end{equation}

Participation is feasible iff

\begin{equation}
Y_i(t) = 1
\iff
C_i(t) \ge C_{\min}
\text{ and }
B_i(t) \ge B_{\min}
\end{equation}

Define the true participation probability:

\begin{equation}
p_i^{\text{true}}(t)
=
\mathbb{P}(Y_i(t)=1)
\end{equation}

\subsection{Optimality of Threshold-Based Availability}

In Eq.~(8), computation success at time $t$ is defined as:
\begin{equation}
H_i^{\text{comp}}(t)
=
\mathbf{1}\{C_i(t) \ge C_{\min}\}
\end{equation}

We estimate computation availability using recent history across the last $T$ rounds:
\begin{equation}
a_i^{\text{comp}} = \frac{1}{T} \sum_{t=1}^{T} H_i^{\text{comp}}(t)
\end{equation}

\begin{theorem}[Bayes Optimality of Threshold Rule]
Assume successful computation is equivalent to
$C_i(t) \ge C_{\min}$.
Under 0–1 loss, the estimator
\begin{equation}
\hat{Y}_i^{\text{comp}}(t)
=
\mathbf{1}\{C_i(t) \ge C_{\min}\}
\end{equation}
is the Bayes-optimal classifier for predicting computational feasibility.
\end{theorem}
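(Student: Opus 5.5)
The argument is the standard characterization of the Bayes classifier under 0--1 loss, specialized to the case where the label is a deterministic function of the observed feature. Let $Y_i^{\text{comp}}(t)$ denote computational feasibility. By the theorem's hypothesis,
\[
Y_i^{\text{comp}}(t)=\mathbf{1}\{C_i(t)\ge C_{\min}\}.
\]
Any classifier $h$ is a measurable function of the observed resource level $C_i(t)$. The goal is to show that $\hat Y_i^{\text{comp}}$ attains the minimal risk $\inf_h \mathbb{P}(h(C_i(t))\neq Y_i^{\text{comp}}(t))$.

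\textbf{Step 1: recall the Bayes rule.} Write the posterior $\eta(c)=\mathbb{P}(Y_i^{\text{comp}}(t)=1\mid C_i(t)=c)$. For any $h$, condition on $C_i(t)=c$:
\[
\mathbb{P}(h(C)\neq Y\mid C=c)=\eta(c)\,\mathbf{1}\{h(c)=0\}+(1-\eta(c))\,\mathbf{1}\{h(c)=1\}\ \ge\ \min\{\eta(c),1-\eta(c)\}.
\]
Equality holds exactly when $h(c)=\mathbf{1}\{\eta(c)>1/2\}$, with ties arbitrary. Taking expectations over $C$ shows that $h^\ast(c)=\mathbf{1}\{\eta(c)>1/2\}$ is Bayes optimal.

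\textbf{Step 2: compute the posterior.} Since $Y$ is a deterministic function of $C$, we have $\eta(c)=\mathbf{1}\{c\ge C_{\min}\}\in\{0,1\}$. Therefore $h^\ast(c)=\mathbf{1}\{c\ge C_{\min}\}$, which is exactly $\hat Y_i^{\text{comp}}(t)$. Moreover, $\min\{\eta,1-\eta\}\equiv 0$, so the Bayes risk is zero. Since $\hat Y_i^{\text{comp}}$ achieves zero error and no classifier can have negative risk, optimality also follows directly.

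\textbf{Step 3: the main obstacle, measurability and the information set.} The only subtle point is specifying what the classifier is allowed to observe. If $C_i(t)$ is observed, the result is immediate. If only a proxy is observed, such as past history or the measured delay $T_i^{\text{comp}}=f(C_i(t))$, the claim needs $f$ to be strictly monotone so that $\{C_i(t)\ge C_{\min}\}=\{T_i^{\text{comp}}\le f(C_{\min})\}$. In that case the threshold rule transfers to a threshold on $T_i^{\text{comp}}$. I would state this assumption explicitly and then note that $a_i^{\text{comp}}$ in Eq.~(8) is the empirical mean of these Bayes-optimal indicators. This links the theorem to the availability estimator without needing any further claim.
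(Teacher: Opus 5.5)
Your proof is correct and follows essentially the same route as the paper's: you identify the Bayes rule under 0--1 loss as thresholding the posterior at $1/2$, and you note that the posterior equals the indicator $\mathbf{1}\{C_i(t)\ge C_{\min}\}$ because the label is a deterministic function of $C_i(t)$. Two of your points go beyond the paper but are not needed for the statement as posed: the observation that the Bayes risk is zero, which makes optimality immediate, and the Step~3 discussion of what the classifier is allowed to observe.
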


\begin{proof}
Consider predicting binary variable
\begin{equation}
Z_i(t) = \mathbf{1}\{C_i(t) \ge C_{\min}\}
\end{equation}

Under 0–1 loss,
the Bayes classifier is

\begin{equation}
\hat{Z}_i(t)
=
\mathbf{1}
\left\{
\mathbb{P}(Z_i(t)=1 \mid C_i(t)) \ge \frac{1}{2}
\right\}
\end{equation}

Since $Z_i(t)$ is deterministically defined by $C_i(t)$,

\begin{equation}
\mathbb{P}(Z_i(t)=1 \mid C_i(t))
=
\mathbf{1}\{C_i(t)\ge C_{\min}\}
\end{equation}

Thus the Bayes decision boundary is exactly
$C_i(t)=C_{\min}$.
No alternative rule achieves smaller expected misclassification risk.
Therefore thresholding is not heuristic,
but the unique optimal decision rule under feasibility semantics.
\end{proof}

\subsection{Multiplicative Composition}

Eq.~(9) defines total availability as

\begin{equation}
a_i(t)
=
a_i^{\text{comp}}(t)
\cdot
a_i^{\text{comm}}(t)
\end{equation}

\begin{theorem}[Exact Factorization Under Conditional Independence]
Assume that conditional on system state $S_i(t)$,
computation and communication failures are independent:

\begin{equation}
C_i(t) \perp B_i(t) \mid S_i(t)
\end{equation}

Then

\begin{equation}
p_i^{\text{true}}(t)
=
a_i^{\text{comp}}(t)
\cdot
a_i^{\text{comm}}(t)
\end{equation}
\end{theorem}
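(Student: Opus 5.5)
The plan is to read every probability in the statement as conditional on the system state $S_i(t)$. We identify $a_i^{\text{comp}}(t)$ and $a_i^{\text{comm}}(t)$ with their population counterparts
\[
a_i^{\text{comp}}(t)=\mathbb{P}\big(C_i(t)\ge C_{\min}\mid S_i(t)\big),\qquad
a_i^{\text{comm}}(t)=\mathbb{P}\big(B_i(t)\ge B_{\min}\mid S_i(t)\big),
\]
and $p_i^{\text{true}}(t)=\mathbb{P}(Y_i(t)=1\mid S_i(t))$. This identification is natural: the empirical averages of Eq.~(8) are estimators of these quantities, with $H_i^{\text{comp}}(t)=\mathbf{1}\{C_i(t)\ge C_{\min}\}$ and, analogously, $H_i^{\text{comm}}(t)=\mathbf{1}\{B_i(t)\ge B_{\min}\}$. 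The proof then has two parts: an exact factorization at the population level, and a consistency argument linking it to the estimators actually used in Eq.~(9).

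\textbf{Exact factorization.} By the feasibility characterization of $Y_i(t)$ stated above,
\[
\{Y_i(t)=1\}=\{C_i(t)\ge C_{\min}\}\cap\{B_i(t)\ge B_{\min}\}.
\]
Hence $p_i^{\text{true}}(t)$ is the conditional probability of an intersection of two events. One event is measurable with respect to $C_i(t)$ and the other with respect to $B_i(t)$. Conditional independence $C_i(t)\perp B_i(t)\mid S_i(t)$ passes to such events, since indicator functions of independent variables are independent. The conditional probability of the intersection therefore splits into the product of the two conditional marginals, which are exactly $a_i^{\text{comp}}(t)\,a_i^{\text{comm}}(t)$. This step is a one-line computation.

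\textbf{Link to the estimators of Eq.~(8).} Eq.~(8) defines $a_i^{\text{comp}}$ and $a_i^{\text{comm}}$ as window averages over the last $T$ rounds. For these, I would add a mild stationarity assumption: while $S_i(t)$ stays fixed over the window, the pairs $(C_i(t),B_i(t))$ are i.i.d., or at least ergodic, given the state. Under that assumption, the law of large numbers gives almost-sure convergence of each window average to its conditional marginal. Convergence of the product then follows from the continuous mapping theorem, so the product estimator is consistent for $p_i^{\text{true}}(t)$. If time permits, a Hoeffding bound on each factor would give an $O(T^{-1/2})$ error for the product, using $|xy-x'y'|\le|x-x'|+|y-y'|$ on $[0,1]$.

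\textbf{Main obstacle.} The difficulty is conceptual rather than computational: the identity holds only conditionally on $S_i(t)$. Unconditionally,
\[
\mathbb{E}\big[a^{\text{comp}}a^{\text{comm}}\big]\neq\mathbb{E}\big[a^{\text{comp}}\big]\,\mathbb{E}\big[a^{\text{comm}}\big]
\]
in general, because a shared state such as a device being asleep drives both resources together. So the proof must state explicitly that $p_i^{\text{true}}(t)$ means the state-conditional probability. I would also note that the gap between the product of unconditional marginals and the true probability equals a covariance of the conditional success probabilities across states, and that this gap is exactly the correlated-failure effect that the penalty $\rho_i(t)$ is meant to correct.
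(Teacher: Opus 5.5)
Your argument is correct and is essentially the paper's. The paper writes $p_i^{\text{true}}(t)$ as $\mathbb{E}_{S_i}$ of the conditional joint probability, factorizes inside using $C_i(t)\perp B_i(t)\mid S_i(t)$, and lets the expectation ``collapse'' because $S_i(t)$ is observed; this is exactly your choice to read $p_i^{\text{true}}$, $a_i^{\text{comp}}$ and $a_i^{\text{comm}}$ as state-conditional probabilities, and your law-of-large-numbers/Hoeffding link to the window averages of Eq.~(8) is an extra step the paper does not attempt.

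One side remark should be dropped. The covariance gap you identify is within-node dependence between computation and communication, which the paper controls through the Fr\'echet bound and $\kappa_i(t)$, not through the cross-node penalty $\rho_i(t)$. Moreover, for a shared ``asleep'' state that covariance is positive, so downweighting by $(1-\rho_i(t))$ would push the estimate in the wrong direction.
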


\begin{proof}
By definition,

\begin{equation}
p_i^{\text{true}}(t)
=
\mathbb{P}(C_i \ge C_{\min}, B_i \ge B_{\min})
\end{equation}

Condition on $S_i(t)$:

\begin{equation}
=
\mathbb{E}_{S_i}
\left[
\mathbb{P}(C_i \ge C_{\min}, B_i \ge B_{\min} \mid S_i)
\right]
\end{equation}

Under conditional independence,

\begin{equation}
=
\mathbb{E}_{S_i}
\left[
\mathbb{P}(C_i \ge C_{\min} \mid S_i)
\cdot
\mathbb{P}(B_i \ge B_{\min} \mid S_i)
\right]
\end{equation}

If $S_i(t)$ is observed via telemetry,
the conditional expectation collapses and

\begin{equation}
p_i^{\text{true}}(t)
=
a_i^{\text{comp}}(t)
a_i^{\text{comm}}(t)
\end{equation}

Thus multiplication follows directly from probability axioms.
\end{proof}

\subsection{Dependence Case and Approximation Error}

When independence does not hold,
define deviation

\begin{equation}
\delta_i(t)
=
p_i^{\text{true}}(t)
-
a_i^{\text{comp}}(t)
a_i^{\text{comm}}(t)
\end{equation}

\begin{theorem}[Sharp Fréchet Error Bound]
For arbitrary dependence,

\begin{equation}
|\delta_i(t)|
\le
\min(a_c,a_b)
-
a_c a_b
\end{equation}

where $a_c = a_i^{\text{comp}}(t)$ and
$a_b = a_i^{\text{comm}}(t)$
\end{theorem}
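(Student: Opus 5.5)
The plan is to read $a_c$ and $a_b$ as the marginal probabilities $a_c=\mathbb{P}(A)$ and $a_b=\mathbb{P}(B)$ of the events $A=\{C_i(t)\ge C_{\min}\}$ and $B=\{B_i(t)\ge B_{\min}\}$. By the generative model, $p_i^{\text{true}}(t)=\mathbb{P}(A\cap B)$. This identifies $a_i^{\text{comp}},a_i^{\text{comm}}$ with the population limits of the window averages; the Bayes-optimality theorem above justifies this, since $H_i^{\text{comp}}(t)=\mathbf{1}_A$. The statement then reduces to controlling $\mathbb{P}(A\cap B)-\mathbb{P}(A)\mathbb{P}(B)$ over all joint laws with the given marginals.

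The first step is the Fréchet--Hoeffding inequalities
\[
\max(0,a_c+a_b-1)\;\le\;\mathbb{P}(A\cap B)\;\le\;\min(a_c,a_b).
\]
The upper bound follows from $A\cap B\subseteq A$ and $A\cap B\subseteq B$. The lower bound follows from $\mathbb{P}(A\cup B)\le 1$ and inclusion--exclusion. Subtracting $a_ca_b$ then gives
\[
\max(0,a_c+a_b-1)-a_ca_b\;\le\;\delta_i(t)\;\le\;\min(a_c,a_b)-a_ca_b.
\]
The upper half is exactly the claimed bound. Sharpness of the upper half comes from the comonotone coupling: set $A=\{U\le a_c\}$ and $B=\{U\le a_b\}$ for a single uniform $U$. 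This coupling models computation and communication degrading together, e.g.\ a shared power or network state $S_i(t)$, and it attains equality.

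The main obstacle is the lower half, i.e.\ bounding $-\delta_i(t)$. Take $a_c\le a_b$ without loss of generality; the analysis splits into two cases.
\begin{itemize}
\item If $a_c+a_b\le 1$, one needs $a_ca_b\le a_c-a_ca_b$, which holds iff $a_b\le\tfrac12$.
\item If $a_c+a_b>1$, one needs $(1-a_c)(1-a_b)\le a_c(1-a_b)$, which holds iff $a_c\ge\tfrac12$.
\end{itemize}
Thus the two-sided statement $|\delta_i(t)|\le\min(a_c,a_b)-a_ca_b$ is not true for arbitrary marginals. For example, $a_c=0.1$ and $a_b=0.8$ with disjoint events give $\delta_i(t)=-0.08$, which exceeds the claimed bound $0.02$ in absolute value.

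I would therefore present the result in the following corrected form. The sharp two-sided Fréchet bound is
\[
|\delta_i(t)|\le\max\bigl\{\min(a_c,a_b)-a_ca_b,\;a_ca_b-\max(0,a_c+a_b-1)\bigr\},
\]
with both extremes attained, by the comonotone and countermonotone couplings respectively. The stated inequality holds under either of two extra conditions. The first is positive dependence, $\delta_i(t)\ge 0$; this is natural when both failures are driven by a common state, for example when $\mathbf{1}_A$ and $\mathbf{1}_B$ are both increasing in $S_i(t)$, which gives association by Harris/FKG. The second is the marginal condition from the case analysis, $\min(a_c,a_b)\ge\tfrac12$ or $\max(a_c,a_b)\le\tfrac12$, under which the first term of the maximum dominates.
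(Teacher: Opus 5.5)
Your analysis is correct. The gap is in the paper's proof, not in yours.

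The paper takes exactly your route. It writes the Fr\'echet--Hoeffding sandwich $\max(0,a_c+a_b-1)\le p_i^{\text{true}}\le\min(a_c,a_b)$ and then says: subtract $a_ca_b$, take absolute values, and ``the maximum deviation occurs at the endpoints, giving the stated bound.'' The endpoint remark is true. However, the lower endpoint contributes $a_ca_b-\max(0,a_c+a_b-1)$ to $|\delta_i(t)|$, and the paper silently drops this term.

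Your case analysis is right. With $a_c\le a_b$, the claim requires $a_b\le\tfrac12$ when $a_c+a_b\le 1$, and $a_c\ge\tfrac12$ when $a_c+a_b>1$. Your counterexample also checks out. Take $a_c=0.1$, $a_b=0.8$ and disjoint events, which is admissible since $a_c+a_b\le 1$. Then $\delta_i(t)=-0.08$, while $\min(a_c,a_b)-a_ca_b=0.02$.

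So the statement as written is false. What the Fr\'echet argument actually proves is the one-sided bound $\delta_i(t)\le\min(a_c,a_b)-a_ca_b$, which is sharp via the comonotone coupling. Your two-sided replacement $\max\{\min(a_c,a_b)-a_ca_b,\;a_ca_b-\max(0,a_c+a_b-1)\}$ is the correct sharp version, attained by the comonotone and countermonotone couplings. Both of your repair conditions are valid: positive dependence $\delta_i(t)\ge 0$, or both marginals on the same side of $\tfrac12$.

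Two of your justifications need tightening.

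First, the Bayes-optimality theorem does not license identifying the window average $\frac1T\sum_t H_i^{\text{comp}}(t)$ with $\mathbb{P}(C_i(t)\ge C_{\min})$. That theorem only says the threshold indicator is the right classifier. Equating a finite-window frequency with a marginal probability is a stationarity or law-of-large-numbers assumption. The paper makes this assumption implicitly here and explicitly later, where it sets $a_i^{\text{comp}}(t)=\mathbb{P}(C_i(t)\ge C_{\min})$. Without it, $\delta_i(t)$ mixes estimation error with dependence error, and the Fr\'echet bounds do not apply to it directly.

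Second, the Harris/FKG step needs $S_i(t)$ to have a positively associated law. Examples are a scalar $S_i(t)$, where it reduces to Chebyshev's association inequality, or a state with independent coordinates. For a general multivariate state, coordinatewise monotonicity alone does not give $\delta_i(t)\ge 0$. For instance, with $S=(X,-X)$, the increasing events $\{X>0\}$ and $\{-X>0\}$ are disjoint.
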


\begin{proof}
By Fréchet–Hoeffding bounds:

\begin{equation}
\max(0,a_c+a_b-1)
\le
p_i^{\text{true}}
\le
\min(a_c,a_b)
\end{equation}

Subtract $a_c a_b$ and take absolute value.
The maximum deviation occurs at the endpoints,
giving the stated bound.
\end{proof}

Thus multiplicative composition is the maximum-entropy
joint model with bounded worst-case error.

\subsection{Recovery Probability}

Eq.~(10) defines the recovery probability as the conditional probability that a node
meets the deadline in the next round, given that it failed in the previous round:

\begin{equation}
\begin{aligned}
\beta_i(t)
= &
\Pr\big(
T_i^{\mathrm{comp}}(t+1) + T_i^{\mathrm{comm}}(t+1) \le \\
&
T_{\max}
\;\big|\;
T_i^{\mathrm{comp}}(t) + T_i^{\mathrm{comm}}(t) > T_{\max}
\big)
\end{aligned}
\end{equation}

\begin{theorem}[Recovery under Hazard-Based Model]
Assume that the completion time
$T_i(t) = T_i^{\mathrm{comp}}(t) + T_i^{\mathrm{comm}}(t)$
follows a stochastic process with hazard rate $\lambda_i(t)$ governing
the recovery from deadline violations.
Then, for a small time interval $\Delta$ corresponding to one round,
the recovery probability satisfies
\begin{equation}
\beta_i(t)
=
1 - \exp(-\lambda_i(t)\Delta)
+
O(\Delta^2)
\end{equation}
\end{theorem}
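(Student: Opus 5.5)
I would make the hazard-based model precise by treating the time a client spends in the ``failed'' state (deadline violation, $T_i > T_{\max}$) as a sojourn whose termination is governed by the hazard rate $\lambda_i(\cdot)$. Let $\tau_i$ be the residual time until recovery, measured from the start of round $t+1$ given that the node was failed at round $t$. The hazard assumption says
\begin{equation*}
\Pr(\tau_i \le s+ds \mid \tau_i > s) = \lambda_i(t+s)\,ds + o(ds).
\end{equation*}
The event in Eq.~(10), namely meeting the deadline at round $t+1$ given failure at round $t$, is then identified with $\{\tau_i \le \Delta\}$, where $\Delta$ is the round length. This identification is the modelling step; I would state it explicitly as the content of ``governing the recovery from deadline violations''.

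The main computation is the standard survival-function argument. Define $S(s)=\Pr(\tau_i>s)$. The hazard definition gives the ODE $S'(s) = -\lambda_i(t+s)S(s)$ with $S(0)=1$. Solving it yields
\begin{equation*}
\beta_i(t) = 1 - S(\Delta) = 1-\exp\!\Big(-\int_0^\Delta \lambda_i(t+s)\,ds\Big).
\end{equation*}

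To reach the stated form, I would assume $\lambda_i$ is locally Lipschitz (or $C^1$) in time with bounded derivative. Then
\begin{equation*}
\int_0^\Delta \lambda_i(t+s)\,ds = \lambda_i(t)\Delta + O(\Delta^2).
\end{equation*}
Since $x\mapsto e^{-x}$ is $1$-Lipschitz on $[0,\infty)$, perturbing the exponent by $O(\Delta^2)$ changes $\beta_i$ by at most $O(\Delta^2)$. This gives $\beta_i(t)=1-\exp(-\lambda_i(t)\Delta)+O(\Delta^2)$. In the constant-hazard (Markov, exponential sojourn) case the remainder vanishes identically, which ties the result to the Markov availability predictor from Section~3.

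The main obstacle is not the calculus but justifying the identification of the discrete conditional probability in Eq.~(10) with the continuous-time event $\{\tau_i\le\Delta\}$. It requires that failure at round $t$ means the node is in the failed state at the round boundary, and that success at round $t+1$ means it has left that state within the round. Memorylessness across the boundary also matters: by the Markov/hazard assumption, conditioning on failure at $t$ is equivalent to conditioning on being in the failed state at time $t$. If the hazard depends on elapsed failure age rather than calendar time, I would reparametrise $\lambda_i(t)$ as the hazard at the current age. The same Taylor argument then goes through with the regularity assumption transferred to that variable.

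Finally, I would remark that to first order $\beta_i(t)=\lambda_i(t)\Delta+O(\Delta^2)$. This justifies the empirical ratio estimator in Section~3 as a consistent estimate of the per-round recovery hazard.
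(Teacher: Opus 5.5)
Your proposal is correct and follows the paper's route. Like the paper, you write the survival function of the failed-state sojourn as $\exp(-\int\lambda_i)$, take $\beta_i(t)$ as the probability of leaving that state within one round of length $\Delta$, and observe that the formula is exact for constant hazard. Your version is more careful: you state the Lipschitz (or $C^1$) regularity of $\lambda_i$ and use the $1$-Lipschitz bound on $e^{-x}$, which are needed to get an $O(\Delta^2)$ rather than merely $o(\Delta)$ remainder; the paper's proof leaves this implicit.
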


\begin{proof}
Let $T_i$ denote the time until the node returns to a state where it meets the deadline.
The survival function of remaining in the failed state is
\begin{equation}
S_i(t)
=
\exp\left(
-
\int_0^t \lambda_i(s)\, ds
\right)
\end{equation}

Conditioned on the node being in a failed state at time $t$, the probability of recovery
within the next interval $\Delta$ is
\begin{equation}
\Pr(T_i \le \Delta \mid T_i > 0)
=
1 - \exp(-\lambda_i(t)\Delta)
+
O(\Delta^2)
\end{equation}

Interpreting $\Delta$ as the duration of one FL round yields the desired result.
For constant hazard (exponential recovery), the expression is exact.
\end{proof}

\subsection{Global Consistency Theorem}

The AW-PSP model defines:

\begin{equation}
p_i'(t)
=
p
\cdot
a_i^{\text{comp}}(t)
a_i^{\text{comm}}(t)
(1-\rho_i(t))
\end{equation}

\subsection{Consistency of the Availability Model}

We now prove that the availability model used in Section~3
is a statistically consistent approximation of the true
round-success probability under a realistic stochastic
participation model.

\subsubsection*{Probability Space}

Let $(\Omega, \mathcal{F}, \mathbb{P})$ be a probability space.
For each client $i$ and round $t$, define random variables:

\begin{equation}
C_i(t), \quad B_i(t), \quad F_i(t) \in \{0,1\}
\end{equation}

where:

\begin{itemize}
\item $C_i(t)$ = available computation resource,
\item $B_i(t)$ = available bandwidth,
\item $F_i(t)=1$ indicates correlated failure event
\end{itemize}

Define feasibility event:

\begin{equation}
\mathcal{E}_i(t)
=
\{C_i(t) \ge C_{\min}\}
\cap
\{B_i(t) \ge B_{\min}\}
\end{equation}

True round success is defined as:

\begin{equation}
Y_i(t)
=
\mathbf{1}
\{
\mathcal{E}_i(t)
\}
\cdot
(1 - F_i(t))
\end{equation}

Thus the true participation probability is:

\begin{equation}
p_i^{\text{true}}(t)
=
\mathbb{P}(Y_i(t)=1)
\end{equation}

\subsubsection*{Model Estimator}

The AW-PSP model defines:

\begin{equation}
p_i'(t)
=
p
\cdot
a_i^{\mathrm{comp}}(t)
a_i^{\mathrm{comm}}(t)
(1-\rho_i(t))
\end{equation}

where:

\begin{equation}
a_i^{\mathrm{comp}}(t)
=
\mathbb{P}(C_i(t)\ge C_{\min}),
\end{equation}

\begin{equation}
a_i^{\mathrm{comm}}(t)
=
\mathbb{P}(B_i(t)\ge B_{\min})
\end{equation}

\begin{equation}
\rho_i(t)
=
\mathbb{P}(F_i(t)=1)
\end{equation}

\begin{theorem}[Nonasymptotic Consistency with Explicit Error Decomposition]

Assume:

\begin{enumerate}
\item (Bounded Dependence)
There exists $\kappa_i(t)$ such that
\begin{equation}
\left|
\mathbb{P}(\mathcal{E}_i(t))
-
a_i^{\mathrm{comp}}(t)
a_i^{\mathrm{comm}}(t)
\right|
\le
\kappa_i(t)
\end{equation}

\item (Failure Independence from Resources)
\begin{equation}
F_i(t) \perp (C_i(t), B_i(t))
\end{equation}

\item (Recovery Approximation Accuracy)
If recovery probability is estimated using
\begin{equation}
\beta_i(t)
=
1 - e^{-\lambda_i(t)\Delta}
\end{equation}
then for small $\Delta$:
\begin{equation}
\left|
\rho_i(t)
-
\beta_i(t)
\right|
\le
\frac{1}{2}
\Lambda_i^2
\Delta^2
\end{equation}
where $\Lambda_i = \sup_t \lambda_i(t)$.
\end{enumerate}

Then the participation estimator satisfies:

\begin{equation}
\left|
p_i'(t)
-
p_i^{\text{true}}(t)
\right|
\le
p\,\kappa_i(t)
+
p\, a_i^{\mathrm{comp}}(t)a_i^{\mathrm{comm}}(t)
\frac{1}{2}\Lambda_i^2 \Delta^2
\end{equation}

In particular, if $\kappa_i(t)\to 0$
and $\Delta \to 0$, then

\begin{equation}
p_i'(t)
\to
p_i^{\text{true}}(t)
\end{equation}

\end{theorem}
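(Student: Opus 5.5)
The plan is a triangle-inequality decomposition with two intermediate quantities, each difference controlled by one of the hypotheses. Two conventions must be fixed first, because the statement leaves them implicit. First, the estimator $p_i'(t)$ carries the baseline factor $p$ while $p_i^{\text{true}}(t)$ does not. I will therefore read the target as the availability-weighted sampling probability $p\,p_i^{\text{true}}(t)$. Equivalently, the bound is the stated one with both sides normalised by $p$, which is exact when $p=1$. Second, assumption~3 only has content if the penalty actually plugged into the estimator is the hazard-based quantity $\beta_i(t)=1-e^{-\lambda_i(t)\Delta}$, while the exact quantity $\rho_i(t)=\mathbb{P}(F_i(t)=1)$ plays the role of the truth. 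So I will write the implemented estimator as
\[
\hat p_i(t)=p\,a_i^{\mathrm{comp}}(t)a_i^{\mathrm{comm}}(t)\bigl(1-\beta_i(t)\bigr),
\]
and prove the bound for $\hat p_i(t)$. This identification is the step I expect to be the main obstacle: with $\rho_i$ used literally on both sides, the second error term would vanish, so the stated form only makes sense under this reading.

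The first step computes the truth exactly. Since $Y_i(t)=\mathbf{1}\{\mathcal{E}_i(t)\}(1-F_i(t))$ and $\mathcal{E}_i(t)$ is $\sigma(C_i(t),B_i(t))$-measurable, assumption~2 gives
\[
p_i^{\text{true}}(t)=\mathbb{P}(\mathcal{E}_i(t))\bigl(1-\rho_i(t)\bigr).
\]
The second step inserts the intermediate quantity $q_i(t)=p\,a_i^{\mathrm{comp}}a_i^{\mathrm{comm}}(1-\rho_i(t))$ and splits
\[
|\hat p_i-p\,p_i^{\text{true}}|\le|\hat p_i-q_i|+|q_i-p\,p_i^{\text{true}}|.
\]
The third step bounds the dependence term. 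We have $|q_i-p\,p_i^{\text{true}}|=p\,(1-\rho_i)\,|a_i^{\mathrm{comp}}a_i^{\mathrm{comm}}-\mathbb{P}(\mathcal{E}_i)|$. Using $0\le 1-\rho_i\le 1$ and assumption~1, this is at most $p\,\kappa_i(t)$. The Fr\'echet theorem stated earlier can be cited to show such a $\kappa_i$ always exists, namely $\min(a_c,a_b)-a_ca_b$.

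The fourth step bounds the recovery term. We have $|\hat p_i-q_i|=p\,a_i^{\mathrm{comp}}a_i^{\mathrm{comm}}|\rho_i-\beta_i|$, and assumption~3 bounds this by $p\,a_i^{\mathrm{comp}}a_i^{\mathrm{comm}}\tfrac12\Lambda_i^2\Delta^2$. Summing the two bounds gives the displayed inequality. For the limit statement, note that $a_i^{\mathrm{comp}}a_i^{\mathrm{comm}}\le1$ and $\Lambda_i<\infty$, so the right-hand side is at most $p\,\kappa_i+\tfrac p2\Lambda_i^2\Delta^2$. This tends to $0$ as $\kappa_i(t)\to0$ and $\Delta\to0$, which yields convergence to the (scaled) true participation probability. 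Finally, I would remark that the recovery hazard theorem above justifies the $O(\Delta^2)$ order appearing in assumption~3.
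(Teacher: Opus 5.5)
Your proof is correct and is essentially the paper's own argument. You factor $p_i^{\text{true}}(t)=\mathbb{P}(\mathcal{E}_i(t))(1-\rho_i(t))$ by independence and add and subtract $a_i^{\mathrm{comp}}a_i^{\mathrm{comm}}(1-\rho_i(t))$ to get the $\kappa_i$ term. Then, exactly as the paper's Step~3 does, you replace $\rho_i$ by $\beta_i$ in the estimator so that assumption~3 gives the $\tfrac{1}{2}\Lambda_i^2\Delta^2$ term.

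Your explicit comparison against $p\,p_i^{\text{true}}(t)$ is more careful than the paper. The paper silently turns its $\kappa_i$ bound into $p\,\kappa_i$, which is valid only when $p=1$ or after exactly your rescaling.

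Do drop the Fr\'echet aside: $\kappa_i=\min(a_c,a_b)-a_ca_b$ is not valid in general. Take $a_c=0.1$, $a_b=0.8$ with disjoint events; then $|\mathbb{P}(\mathcal{E}_i)-a_ca_b|=0.08>0.02$. Since the theorem takes $\kappa_i$ as a hypothesis, nothing in your argument depends on that remark.
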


\begin{proof}

\textbf{Step 1: Expand true participation probability.}

By definition,

\begin{equation}
p_i^{\text{true}}(t)
=
\mathbb{P}
(
\mathcal{E}_i(t)
\cap
\{F_i(t)=0\}
)
\end{equation}

Using independence assumption (2):

\begin{equation}
=
\mathbb{P}(\mathcal{E}_i(t))
\mathbb{P}(F_i(t)=0)
\end{equation}

Thus,

\begin{equation}
p_i^{\text{true}}(t)
=
\mathbb{P}(\mathcal{E}_i(t))
(1-\rho_i(t))
\end{equation}

\textbf{Step 2: Insert multiplicative decomposition.}

Add and subtract the product form:

\begin{align}
p_i^{\text{true}}(t)
&=
a_i^{\mathrm{comp}}(t)
a_i^{\mathrm{comm}}(t)
(1-\rho_i(t))
\\
&\quad
+
\left(
\mathbb{P}(\mathcal{E}_i(t))
-
a_i^{\mathrm{comp}}(t)
a_i^{\mathrm{comm}}(t)
\right)
(1-\rho_i(t))
\end{align}

By assumption (1),

\begin{equation}
\left|
\mathbb{P}(\mathcal{E}_i(t))
-
a_i^{\mathrm{comp}}(t)
a_i^{\mathrm{comm}}(t)
\right|
\le
\kappa_i(t)
\end{equation}

Therefore,

\begin{equation}
\left|
p_i^{\text{true}}(t)
-
a_i^{\mathrm{comp}}(t)
a_i^{\mathrm{comm}}(t)
(1-\rho_i(t))
\right|
\le
\kappa_i(t)
\end{equation}

\textbf{Step 3: Insert recovery approximation.}

Now consider estimator:

\begin{equation}
p_i'(t)
=
p
a_i^{\mathrm{comp}}(t)
a_i^{\mathrm{comm}}(t)
(1-\beta_i(t))
\end{equation}

Subtract from true probability:

\begin{align}
&
\left|
p_i'(t)
-
p_i^{\text{true}}(t)
\right|
\\
&\le
p
\left|
a_i^{\mathrm{comp}} a_i^{\mathrm{comm}}
(1-\beta_i)
-
a_i^{\mathrm{comp}} a_i^{\mathrm{comm}}
(1-\rho_i)
\right|
+
p \kappa_i(t)
\end{align}

Factor:

\begin{equation}
=
p
a_i^{\mathrm{comp}} a_i^{\mathrm{comm}}
|\rho_i - \beta_i|
+
p\kappa_i(t)
\end{equation}

\textbf{Step 4: Apply hazard expansion bound.}

By Taylor expansion of exponential:

\begin{equation}
\beta_i(t)
=
\lambda_i(t)\Delta
-
\frac{1}{2}\lambda_i(t)^2 \Delta^2
+
O(\Delta^3)
\end{equation}

Thus

\begin{equation}
|\rho_i(t) - \beta_i(t)|
\le
\frac{1}{2}\Lambda_i^2 \Delta^2
\end{equation}

\textbf{Step 5: Combine bounds.}

Substituting:

\begin{equation}
\left|
p_i'(t)
-
p_i^{\text{true}}(t)
\right|
\le
p\,\kappa_i(t)
+
p\, a_i^{\mathrm{comp}} a_i^{\mathrm{comm}}
\frac{1}{2}\Lambda_i^2 \Delta^2
\end{equation}

This completes the proof.
\end{proof}

\subsection{Consistency of the Availability Model with Correlated Failures (Eq.~12--15)}

We refine the consistency analysis to explicitly account for the paper's
runtime co-correlation model (Eq.~12--14) and the correlation penalty $\rho_i(t)$.

\subsubsection*{Probability Space and Failure Graph Model}

Let $(\Omega,\mathcal{F},\mathbb{P})$ be a probability space.
For each round $t$, define:

\begin{equation}
C_i(t), \quad B_i(t)
\end{equation}

and a vector of binary failure variables:

\begin{equation}
\mathbf{Z}(t) = (Z_1(t),\dots,Z_N(t)) \in \{0,1\}^N
\end{equation}

where $Z_i(t)=1$ denotes that client $i$ experiences a failure (dropout or timeout)
at round $t$.

Define the (paper) correlation score:

\begin{equation}
\gamma_{i,j}(t) = \alpha \gamma^{\text{trace}}_{i,j} + (1-\alpha)\gamma^{\text{fail}}_{i,j}(t)
\end{equation}

Define:

\begin{equation}
\mathcal{G}_i(t) = \{ j : \gamma_{i,j}(t) > \tau_{\text{corr}} \}
\end{equation}

Define the correlation penalty (Eq.~14):

\begin{equation}
\rho_i(t) = \sum_{j \in \mathcal{G}_i(t)} \hat{C}_{i,j} \gamma_{i,j}(t)
\end{equation}

We interpret $\rho_i(t)$ as an \emph{upper bound on conditional failure risk}
induced by correlated peers.

\subsubsection*{True Success Variable}

Define feasibility event:

\begin{equation}
\mathcal{E}_i(t)
=
\{C_i(t)\ge C_{\min}\}\cap\{B_i(t)\ge B_{\min}\}
\end{equation}

Define true round success:

\begin{equation}
Y_i(t) =
\mathbf{1}\{\mathcal{E}_i(t)\}
\cdot
(1-Z_i(t))
\end{equation}

Thus:

\begin{equation}
p_i^{\text{true}}(t) = \mathbb{P}(Y_i(t)=1)
\end{equation}

\subsubsection*{AW-PSP Estimator}

AW-PSP uses:

\begin{equation}
p_i'(t)
=
p \cdot \Big[(a_i^{\text{comp}} a_i^{\text{comm}}) + (1-a_i^{\text{comp}} a_i^{\text{comm}})\beta_i(t)\Big](1-\rho_i(t))
\end{equation}

For the purpose of validating Eq.~(8)--(10) plus the correlation penalty,
we focus on the primary factorization:

\begin{equation}
\tilde{p}_i(t)
=
p \cdot a_i^{\text{comp}}(t)a_i^{\text{comm}}(t)(1-\rho_i(t))
\end{equation}

\begin{theorem}[Consistency with Explicit Correlated-Failure Error Term]
Assume:

\begin{enumerate}
\item (Bounded dependence of resources)
There exists $\kappa_i(t)$ such that
\begin{equation}
\left|
\mathbb{P}(\mathcal{E}_i(t))
-
a_i^{\text{comp}}(t)a_i^{\text{comm}}(t)
\right|
\le
\kappa_i(t)
\end{equation}

\item (Correlated failure upper bound)
The correlation penalty $\rho_i(t)$ satisfies:
\begin{equation}
\mathbb{P}(Z_i(t)=1 \mid \mathcal{E}_i(t))
\le
\rho_i(t) + \epsilon_i^{\rho}(t)
\end{equation}
for some $\epsilon_i^{\rho}(t)\ge 0$ capturing mismatch between the
penalty model and the true conditional failure risk.

\item (Recovery approximation)
The recovery estimator satisfies:
\begin{equation}
\left|
\mathbb{P}(Z_i(t+1)=0 \mid Z_i(t)=1) - \beta_i(t)
\right|
\le
\epsilon_i^{\beta}(t)
\end{equation}
\end{enumerate}

Then, for each round $t$,

\begin{equation}
\left|
\tilde{p}_i(t) - p_i^{\text{true}}(t)
\right|
\le
p\,\kappa_i(t)
+
p\,\epsilon_i^{\rho}(t)
+
p\,a_i^{\text{comp}}(t)a_i^{\text{comm}}(t)\rho_i(t)\kappa_i(t)
\end{equation}

Moreover, incorporating recovery yields

\begin{equation}
\begin{aligned}
\left|p_i'(t) - p_i^{\text{true}}(t)\right|
\le &
p\,\kappa_i(t)
+
p\,\epsilon_i^{\rho}(t)
+
p\,\epsilon_i^{\beta}(t)\\
&\quad + p\,a_i^{\text{comp}}(t)a_i^{\text{comm}}(t)\rho_i(t)\kappa_i(t)
\end{aligned}
\end{equation}
\end{theorem}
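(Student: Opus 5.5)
We will follow the same add-and-subtract template used for the preceding nonasymptotic consistency theorem. As there, we compare the estimator to the sampling-scaled target $p\,p_i^{\text{true}}(t)$. Write
\[
q_i(t) := \mathbb{P}(Z_i(t)=1 \mid \mathcal{E}_i(t)).
\]
By the definition of $Y_i(t)$ we then have $p_i^{\text{true}}(t) = \mathbb{P}(\mathcal{E}_i(t))\,(1-q_i(t))$. Here conditioning replaces the independence assumption that was used before, so no independence between failures and resources is needed. Abbreviating $A := a_i^{\text{comp}}(t)a_i^{\text{comm}}(t)$, we get the exact identity
\[
\frac{\tilde p_i(t)}{p} - p_i^{\text{true}}(t)
= \big(A - \mathbb{P}(\mathcal{E}_i(t))\big)(1-\rho_i(t)) + \mathbb{P}(\mathcal{E}_i(t))\big(q_i(t)-\rho_i(t)\big).
\]
This identity is the backbone of the proof.

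\textbf{Step 1 (resource term).} Since $\rho_i(t)\in[0,1]$, assumption (1) bounds the first summand by $\kappa_i(t)$.

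\textbf{Step 2 (correlation term).} Bound $\mathbb{P}(\mathcal{E}_i(t))\le 1$, so the second summand is at most $|q_i(t)-\rho_i(t)|$. We would then like to invoke assumption (2), but as stated it is one-sided: it gives $q_i(t)-\rho_i(t)\le \epsilon_i^{\rho}(t)$. That controls only the direction $p\,p_i^{\text{true}}(t)-\tilde p_i(t)$.

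For the reverse direction we will read $\epsilon_i^{\rho}(t)$ as the two-sided mismatch $|q_i(t)-\rho_i(t)|$, which is its stated role as ``mismatch between the penalty model and the true conditional failure risk''. We will state this reading explicitly.

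If one prefers to expand $\mathbb{P}(\mathcal{E}_i(t)) = A + O(\kappa_i(t))$ inside the second term, a cross term $A\,\rho_i(t)\,\kappa_i(t)$ appears. It is nonnegative, so it can simply be carried along, and this accounts for the last summand in the claimed bound. Multiplying through by $p$ gives the first inequality.

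\textbf{Step 3 (recovery).} Here
\[
p_i'(t)-\tilde p_i(t) = p\,(1-A)\,\beta_i(t)\,(1-\rho_i(t)).
\]
We interpret this as the recovery mass that the model adds for nodes that were infeasible or failed in the previous round. The matching true quantity is the mass of nodes that failed at $t-1$ and succeed at $t$, weighted by $\mathbb{P}(Z_i(t)=0\mid Z_i(t-1)=1)$. The prefactor $(1-A)(1-\rho_i(t))$ lies in $[0,1]$. Hence, by assumption (3) and the triangle inequality, the extra error is at most $p\,\epsilon_i^{\beta}(t)$. Adding this to the bound from Steps 1--2 gives the second inequality.

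\textbf{Main obstacle.} The main difficulty is not computational. It is pinning down the targets precisely enough for the stated inequalities to hold. Two points need care: the one-sidedness of assumption (2), and what the recovery term in $p_i'$ is being compared against, since $p_i^{\text{true}}(t)$ itself contains no recovery component. We will first try to make both explicit: the two-sided reading of $\epsilon_i^{\rho}$, and the recovery-augmented target. If that fails, we will settle for a one-sided bound in the $\rho$ term, with the recovery error stated relative to the augmented target.
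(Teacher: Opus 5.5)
Your argument has the same structure as the paper's proof. You condition on $\mathcal{E}_i(t)$, which gives $p_i^{\text{true}}=\mathbb{P}(\mathcal{E}_i)(1-q_i)$ without any independence assumption. You then add and subtract $A=a_i^{\text{comp}}a_i^{\text{comm}}$ to isolate a $\kappa_i$ term, apply (2) to the failure factor, and add $\epsilon_i^{\beta}$ for recovery.

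\textbf{The statement is false as written, so your fallback is mandatory.} The two points you flag, plus the rescaling you adopt silently, are exactly where the argument breaks. Take $\mathbb{P}(\mathcal{E}_i)=A$, so $\kappa_i=0$.
(a) With $Z_i\equiv 0$ and $\rho_i=\epsilon_i^{\rho}=0$, you get $|\tilde p_i-p_i^{\text{true}}|=(1-p)A$, while the right-hand side is $0$. Comparing with $p\,p_i^{\text{true}}$ is therefore a correction of the statement. Announce it as such, not as following precedent.
(b) With $Z_i\equiv 0$, $\rho_i=\tfrac12$ and $\epsilon_i^{\rho}=0$, assumption (2) holds, yet $|\tilde p_i-p\,p_i^{\text{true}}|=pA/2$. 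So either assume $|q_i-\rho_i|\le\epsilon_i^{\rho}$, or conclude only a one-sided bound.
(c) Take $A=\tfrac12$, $Z_i(t)=1-\mathbf{1}\{\mathcal{E}_i(t)\}$ (so $q_i=0$), $\rho_i=0$, $Z_i(t+1)$ an independent fair coin, and $\beta_i(t)=\tfrac12$. Every error term vanishes, but $p_i'-p\,p_i^{\text{true}}=p/4$.

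The paper's proof does not avoid any of this. Its Step~2 yields only a lower bound on $p_i^{\text{true}}$. Its upper bound $p_i^{\text{true}}\le\mathbb{P}(\mathcal{E}_i)$ gives only $p_i^{\text{true}}-A(1-\rho_i)\le\kappa_i+A\rho_i$. The factor $p$ appears without justification in Step~4, and Step~5 is an assertion.

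\textbf{Fixes needed in your own write-up.}
First, the direction in your Step~2 is reversed. The inequality $q_i-\rho_i\le\epsilon_i^{\rho}$ bounds $\tilde p_i-p\,p_i^{\text{true}}$ from above, so the model cannot over-estimate by much. The uncontrolled direction is $p\,p_i^{\text{true}}-\tilde p_i$, i.e.\ an over-penalising $\rho_i$.

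Second, expanding $\mathbb{P}(\mathcal{E}_i)=A+\Delta_i$ does not produce a cross term $A\rho_i\kappa_i$. It produces $\Delta_i(q_i-\rho_i)$, of size at most $\kappa_i\epsilon_i^{\rho}$. The paper's attribution of that term to $\mathbb{P}(\mathcal{E}_i)\epsilon_i^{\rho}\le(A+\kappa_i)\epsilon_i^{\rho}$ fails in the same way. Your bound $p\kappa_i+p\epsilon_i^{\rho}$ is already stronger, so simply note that the extra summand is slack.

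Third, in Step~3 the triangle inequality against $p\,p_i^{\text{true}}$ gives only $p(1-A)\beta_i(1-\rho_i)\le p\beta_i$. So ``hence at most $p\epsilon_i^{\beta}$'' is false until you define the target. One choice is $\pi_i(t)=p\,p_i^{\text{true}}(t)+p(1-A)(1-\rho_i(t))\,r_i(t)$ with $r_i(t)=\mathbb{P}(Z_i(t+1)=0\mid Z_i(t)=1)$. This uses the $t\to t+1$ transition that (3) actually controls, not the $t-1\to t$ transition you wrote.

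Finally, make $\rho_i\in[0,1]$ an explicit hypothesis, since $\rho_i=\sum_j\hat C_{i,j}\gamma_{i,j}$ is not automatically at most $1$. With these repairs your exact identity proves the corrected theorem cleanly, which the paper's argument does not.
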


\begin{proof}
\textbf{Step 1: expand true success probability.}
By definition,

\begin{equation}
p_i^{\text{true}}(t)
=
\mathbb{P}(\mathcal{E}_i(t)\cap\{Z_i(t)=0\}).
\end{equation}

Condition on $\mathcal{E}_i(t)$:

\begin{equation}
p_i^{\text{true}}(t)
=
\mathbb{P}(\mathcal{E}_i(t))
\cdot
\mathbb{P}(Z_i(t)=0 \mid \mathcal{E}_i(t))
\end{equation}

Rewrite:

\begin{equation}
\mathbb{P}(Z_i(t)=0 \mid \mathcal{E}_i(t))
=
1 - \mathbb{P}(Z_i(t)=1 \mid \mathcal{E}_i(t))
\end{equation}

\textbf{Step 2: compare correlated-failure factor.}
By Assumption (2),

\begin{equation}
\mathbb{P}(Z_i(t)=1 \mid \mathcal{E}_i(t))
\le
\rho_i(t) + \epsilon_i^{\rho}(t)
\end{equation}

thus

\begin{equation}
\mathbb{P}(Z_i(t)=0 \mid \mathcal{E}_i(t))
\ge
1-\rho_i(t)-\epsilon_i^{\rho}(t)
\end{equation}

Therefore,

\begin{equation}
p_i^{\text{true}}(t)
\ge
\mathbb{P}(\mathcal{E}_i(t))(1-\rho_i(t))
-
\mathbb{P}(\mathcal{E}_i(t))\epsilon_i^{\rho}(t)
\end{equation}

Similarly, trivially $\mathbb{P}(Z_i(t)=0 \mid \mathcal{E}_i(t))\le 1$ implies

\begin{equation}
p_i^{\text{true}}(t)
\le
\mathbb{P}(\mathcal{E}_i(t))
\end{equation}

\textbf{Step 3: compare feasibility factor.}
Insert and subtract $a_i^{\text{comp}}a_i^{\text{comm}}$:

\begin{equation}
\mathbb{P}(\mathcal{E}_i(t))
=
a_i^{\text{comp}}(t)a_i^{\text{comm}}(t)
+
\Delta_i(t)
\end{equation}

where $|\Delta_i(t)|\le \kappa_i(t)$ by Assumption (1).

Hence:

\begin{align}
\mathbb{P}(\mathcal{E}_i(t))(1-\rho_i(t))
&=
a_i^{\text{comp}}a_i^{\text{comm}}(1-\rho_i(t))
+
\Delta_i(t)(1-\rho_i(t))
\end{align}

Taking absolute values:

\begin{equation}
\left|
\mathbb{P}(\mathcal{E}_i(t))(1-\rho_i(t))
-
a_i^{\text{comp}}a_i^{\text{comm}}(1-\rho_i(t))
\right|
\le
\kappa_i(t)
\end{equation}

\textbf{Step 4: combine.}
Recall:

\begin{equation}
\tilde{p}_i(t)
=
p\cdot a_i^{\text{comp}}(t)a_i^{\text{comm}}(t)(1-\rho_i(t))
\end{equation}

Thus:

\begin{equation}
\left|
\tilde{p}_i(t) - p_i^{\text{true}}(t)
\right|
\le
p\,\kappa_i(t)
+
p\,\epsilon_i^{\rho}(t)
+
p\,a_i^{\text{comp}}a_i^{\text{comm}}\rho_i(t)\kappa_i(t)
\end{equation}

where the final product term arises from bounding
$\mathbb{P}(\mathcal{E}_i(t))\epsilon_i^{\rho}(t)$ using
$\mathbb{P}(\mathcal{E}_i(t))\le a_i^{\text{comp}}a_i^{\text{comm}}+\kappa_i(t)$.

\textbf{Step 5: recovery term.}
The extension to $p'_i(t)$ follows by adding Assumption (3)
and bounding the induced deviation by $\epsilon_i^{\beta}(t)$,
since the recovery component contributes only when feasibility fails
and $Z_i(t)=1$. The approximation error $\epsilon_i^{\rho}(t)$
captures the extent to which runtime correlation structure (Eq.~12--14)
matches the true co-failure process; when $\epsilon_i^{\rho}(t)$ is small,
multiplicative downweighting by $(1-\rho_i(t))$ is theoretically justified.
\end{proof}

\begin{theorem}[Convergence under Bounded Sampling Distortion]
\label{thm:awpsp_convergence}
Consider the global objective
\begin{equation}
F(w)=\sum_{i=1}^N q_i F_i(w),
\qquad
q_i \ge 0,
\quad
\sum_{i=1}^N q_i =1
\end{equation}
where each \(F_i\) is \(L\)-smooth and \(\mu\)-strongly convex.
Let \(w^*=\arg\min_w F(w)\).

At round \(t\), client \(i\) is selected with true inclusion probability
\(p_i^{\mathrm{true}}(t)\), while AW-PSP uses proxy probabilities \(p_i'(t)\).
Let \(I_i(t)\in\{0,1\}\) denote the inclusion indicator, with
\(\mathbb{E}[I_i(t)\mid w_t]=p_i^{\mathrm{true}}(t)\).
Define the AW-PSP gradient estimator
\begin{equation}
g_t
=
\sum_{i=1}^N
I_i(t)\,
\frac{q_i}{p_i'(t)}
\nabla F_i(w_t;\xi_i^t)
\end{equation}
and update
\begin{equation}
w_{t+1}=w_t-\eta_t g_t
\end{equation}

Assume:

\begin{enumerate}
\item[\textnormal{(A1)}] \textbf{Bounded sampling distortion:}
there exists \(\delta>0\) such that
\begin{equation}
|p_i'(t)-p_i^{\mathrm{true}}(t)|\le \delta,
\qquad
\forall i,t.
\end{equation}

\item[\textnormal{(A2)}] \textbf{Positive lower bound on sampling probabilities:}
there exists \(p_{\min}>0\) such that
\begin{equation}
p_i^{\mathrm{true}}(t)\ge p_{\min},
\qquad
p_i'(t)\ge p_{\min}-\delta>0,
\qquad
\forall i,t
\end{equation}

\item[\textnormal{(A3)}] \textbf{Unbiased local stochastic gradients:}
\begin{equation}
\mathbb{E}\big[\nabla F_i(w_t;\xi_i^t)\mid w_t\big]
=
\nabla F_i(w_t)
\end{equation}

\item[\textnormal{(A4)}] \textbf{Bounded variance:}
there exists \(\sigma^2\) such that
\begin{equation}
\mathbb{E}\!\left[
\|\nabla F_i(w_t;\xi_i^t)-\nabla F_i(w_t)\|^2
\,\middle|\, w_t
\right]
\le \sigma^2,
\qquad
\forall i,t
\end{equation}

\item[\textnormal{(A5)}] \textbf{Bounded gradients:}
\begin{equation}
\|\nabla F_i(w)\|\le G,
\qquad
\forall i,w
\end{equation}
\end{enumerate}

Then there exist constants \(C_1,C_2>0\), independent of \(t\) and \(\delta\),
such that for the diminishing stepsize
\begin{equation}
\eta_t=\frac{1}{\mu(t+\gamma)},
\qquad
\gamma \ge \max\{1,L/\mu\},
\end{equation}
the iterates satisfy
\begin{equation}
\mathbb{E}[F(w_t)-F(w^*)]
\le
\frac{C_1}{t+\gamma}
+
C_2\,\delta,
\qquad
\forall t\ge 0
\end{equation}
Hence AW-PSP converges to a neighborhood of the optimum whose radius is
\(O(\delta)\), and recovers the standard \(O(1/t)\) rate when \(\delta=0\).
\end{theorem}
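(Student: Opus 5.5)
The plan is the standard biased-SGD argument for strongly convex objectives. I will track the squared distance $e_t:=\mathbb{E}\|w_t-w^*\|^2$, treat the mismatch between $p_i'(t)$ and $p_i^{\mathrm{true}}(t)$ as a bounded \emph{bias} in $g_t$, and convert back to function values at the end using $L$-smoothness: $F(w_t)-F(w^*)\le \tfrac{L}{2}\|w_t-w^*\|^2$, since $\nabla F(w^*)=0$. To make ``constants independent of $\delta$'' meaningful I will additionally restrict to $\delta\le p_{\min}/2$. Under this restriction (A2) gives $p_i'(t)\ge p_{\min}/2$ uniformly, and that is the only way $\delta$ could enter the constants.

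\textbf{Step 1 (bias).} Condition on $w_t$. Assume that the inclusion indicators are independent of the local samples $\xi_i^t$, and use (A3). Then
\begin{equation*}
\mathbb{E}[g_t\mid w_t]=\sum_i q_i\frac{p_i^{\mathrm{true}}(t)}{p_i'(t)}\nabla F_i(w_t)=\nabla F(w_t)+b_t,
\qquad b_t=\sum_i q_i\frac{p_i^{\mathrm{true}}(t)-p_i'(t)}{p_i'(t)}\nabla F_i(w_t).
\end{equation*}
By (A1), (A2) and (A5), $\|b_t\|\le 2G\delta/p_{\min}=:B\delta$.

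\textbf{Step 2 (second moment).} Since $I_i\le 1$ and $q_i/p_i'\le 2q_i/p_{\min}$, the triangle inequality, Jensen over the weights $q_i$, (A4) and (A5) give
\begin{equation*}
\mathbb{E}[\|g_t\|^2\mid w_t]\le \frac{4}{p_{\min}^2}\,2(G^2+\sigma^2)=:M .
\end{equation*}
This is crude, but it is uniform in $t$ and in $\delta$, which is all the argument needs.

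\textbf{Step 3 (one-step recursion).} Expand $\|w_{t+1}-w^*\|^2=\|w_t-w^*\|^2-2\eta_t\langle g_t,w_t-w^*\rangle+\eta_t^2\|g_t\|^2$ and take conditional expectations. Strong convexity of $F$ (a convex combination of $\mu$-strongly convex $F_i$) gives $\langle\nabla F(w_t),w_t-w^*\rangle\ge \mu\|w_t-w^*\|^2$. Hence
\begin{equation*}
e_{t+1}\le(1-2\mu\eta_t)e_t+2\eta_t\,\mathbb{E}\big[\|b_t\|\,\|w_t-w^*\|\big]+\eta_t^2M .
\end{equation*}
The main obstacle is the cross term. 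If it is handled by Young's inequality it yields only a $\delta^2/\mu^2$ neighbourhood of $\|b_t\|^2$ type, and it loses the clean $O(\delta)$ form in function value together with its independence of the iterates. I will instead bound $\|w_t-w^*\|$ deterministically. Strong convexity and (A5) give $\mu\|w-w^*\|\le\|\nabla F(w)\|\le G$ for every $w$, so $\|w_t-w^*\|\le G/\mu=:D$. The cross term is then at most $2\eta_t DB\delta$. Dropping a factor, the recursion becomes
\begin{equation*}
e_{t+1}\le(1-\mu\eta_t)e_t+\eta_t^2M+2\eta_t DB\delta .
\end{equation*}
(If one distrusts the global gradient bound, the fallback is a projection onto a ball of radius $D$, which does not increase distance to $w^*$.)

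\textbf{Step 4 (induction).} With $\eta_t=1/(\mu(t+\gamma))$, I claim by induction on $t$ that
\begin{equation*}
e_t\le \frac{v}{t+\gamma}+K,\qquad K=\frac{2DB\delta}{\mu},\qquad v=\max\Big\{\gamma e_0,\ \frac{M}{\mu^2}\Big\}.
\end{equation*}
Substituting the hypothesis into the recursion, the $K$-terms cancel exactly: $-K/(t+\gamma)+2DB\delta/(\mu(t+\gamma))=0$. What remains is the classical inequality
\begin{equation*}
\Big(1-\frac{1}{t+\gamma}\Big)\frac{v}{t+\gamma}+\frac{M}{\mu^2(t+\gamma)^2}\le\frac{v}{t+\gamma+1},
\end{equation*}
which holds by the choice of $v$. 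The condition $\gamma\ge\max\{1,L/\mu\}$ only guarantees $\mu\eta_t\le 1$ and keeps the contraction factor nonnegative.

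\textbf{Conclusion.} Multiplying by $L/2$ gives the stated bound with $C_1=Lv/2$ and $C_2=LDB/\mu=2LG^2/(\mu^2p_{\min})$. Both constants are free of $t$ and, under $\delta\le p_{\min}/2$, of $\delta$. At $\delta=0$ the bias vanishes and the standard $O(1/t)$ rate is recovered.
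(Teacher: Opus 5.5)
Your architecture matches the paper's: a bias bound, a uniform second moment, the radius bound $\|w_t-w^*\|\le G/\mu$ from strong convexity plus (A5), induction on $e_t\le v/(t+\gamma)+K$, and then $L$-smoothness. However, Step~4 fails as written. In Step~3 you weaken the contraction from $1-2\mu\eta_t$ to $1-\mu\eta_t$. For $\eta_t=1/(\mu(t+\gamma))$ this is exactly $1-1/s$ with $s=t+\gamma$. The inequality you then call classical is false; clearing denominators gives
\[
\Big(1-\frac{1}{s}\Big)\frac{v}{s}+\frac{M}{\mu^2 s^2}\le\frac{v}{s+1}
\quad\iff\quad
v\ge\frac{M}{\mu^2}(s+1).
\]
No constant $v$ satisfies this for all $t$, and with $v=M/\mu^2$ it fails for every $s$. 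This is not a bookkeeping slip; it is the borderline case of the recursion. If $e_{t+1}=(1-1/s)e_t+c/s^2$ holds with equality, then $b_t:=(t+\gamma-1)e_t$ obeys $b_{t+1}=b_t+c/(t+\gamma)$, so $e_t\sim c\log t/t$. With the factor dropped, your recursion can only certify an $O(\log t/t)$ term, not $C_1/(t+\gamma)$.

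The repair is to keep the factor $2$, as the paper does. Take $e_{t+1}\le(1-2/s)e_t+M/(\mu^2 s^2)+2DB\delta/(\mu s)$ with your own $K=2DB\delta/\mu$. The bias terms then leave a harmless $-K/s$, and what remains is $v/s-(2v-M/\mu^2)/s^2\le v/s-v/s^2\le v/(s+1)$ once $v\ge M/\mu^2$.

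The price is the issue you were evidently avoiding: $1-2/s<0$ when $t+\gamma<2$, which can only happen at $t=0$ with $\gamma<2$. At that step, bound $(1-2\mu\eta_0)e_0\le 0$ and check $e_1\le M/\mu^2+K\le v/(\gamma+1)+K$ directly by taking $v\ge 3M/\mu^2$. Alternatively, simply assume $\gamma\ge 2$. The paper's own induction silently multiplies its hypothesis by this possibly negative factor, so handling it explicitly tightens the argument.

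With that change your constants $C_1=Lv/2$ and $C_2=2LG^2/(\mu^2 p_{\min})$ stand. Your other deviations are improvements over the paper:
\begin{itemize}
\item Jensen over the weights $q_i$ removes the paper's factor $N$ from the second moment.
\item You state the independence of $I_i(t)$ and $\xi_i^t$ needed in Step~1, rather than leaving it implicit in a ``tower property''.
\item The restriction $\delta\le p_{\min}/2$ makes the constants genuinely $\delta$-free, whereas the paper's depend on $p_{\min}-\delta$.
\end{itemize}
That restriction costs nothing. For $\delta>p_{\min}/2$ the bound is trivial, since $F(w_t)-F(w^*)\le LG^2/(2\mu^2)\le C_2\delta$.
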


\begin{proof}
We divide the proof into four steps.

\paragraph{Step 1: Bias of the AW-PSP gradient estimator.}
Define the conditional bias
\begin{equation}
b_t
:=
\mathbb{E}[g_t\mid w_t]-\nabla F(w_t).
\end{equation}
Using the definition of \(g_t\), the tower property, and Assumption (A3),
\begin{align}
\mathbb{E}[g_t\mid w_t]
&=
\sum_{i=1}^N
\mathbb{E}[I_i(t)\mid w_t]\,
\frac{q_i}{p_i'(t)}
\mathbb{E}\!\left[\nabla F_i(w_t;\xi_i^t)\mid w_t\right]
\nonumber\\
&=
\sum_{i=1}^N
q_i\,
\frac{p_i^{\mathrm{true}}(t)}{p_i'(t)}
\nabla F_i(w_t)
\end{align}
Therefore,
\begin{equation}
b_t
=
\sum_{i=1}^N
q_i
\left(
\frac{p_i^{\mathrm{true}}(t)}{p_i'(t)}-1
\right)
\nabla F_i(w_t)
\end{equation}
By Assumptions (A1)--(A2),
\begin{equation}
\left|
\frac{p_i^{\mathrm{true}}(t)}{p_i'(t)}-1
\right|
=
\frac{|p_i^{\mathrm{true}}(t)-p_i'(t)|}{p_i'(t)}
\le
\frac{\delta}{p_{\min}-\delta}
\end{equation}
Using \(\sum_i q_i=1\) and Assumption (A5),
\begin{align}
\|b_t\|
&\le
\sum_{i=1}^N
q_i
\left|
\frac{p_i^{\mathrm{true}}(t)}{p_i'(t)}-1
\right|
\|\nabla F_i(w_t)\|
\nonumber\\
&\le
\frac{\delta}{p_{\min}-\delta}
\sum_{i=1}^N q_i G
=
\frac{G\delta}{p_{\min}-\delta}
\end{align}
Define
\begin{equation}
B_\delta := \frac{G\delta}{p_{\min}-\delta}.
\end{equation}
Then
\begin{equation}
\|b_t\|\le B_\delta = O(\delta)
\label{eq:bias_bound}
\end{equation}

\paragraph{Step 2: Second-moment bound for the estimator.}
Write
\begin{equation}
g_t = \mathbb{E}[g_t\mid w_t] + \zeta_t,
\qquad
\mathbb{E}[\zeta_t\mid w_t]=0
\end{equation}
We first bound \(\mathbb{E}\|g_t\|^2\).
By \(\| \sum_i a_i\|^2 \le N\sum_i \|a_i\|^2\),
\begin{align}
\|g_t\|^2
&=
\left\|
\sum_{i=1}^N
I_i(t)\frac{q_i}{p_i'(t)}\nabla F_i(w_t;\xi_i^t)
\right\|^2
\nonumber\\
&\le
N \sum_{i=1}^N
I_i(t)\frac{q_i^2}{(p_i'(t))^2}
\|\nabla F_i(w_t;\xi_i^t)\|^2
\end{align}
Taking conditional expectation and using
\begin{equation}
\begin{aligned}
\mathbb{E}\|\nabla F_i(w_t;\xi_i^t)\|^2
= &
\|\nabla F_i(w_t)\|^2
+
\mathbb{E}\|\nabla F_i(w_t;\xi_i^t)-\nabla F_i(w_t)\|^2\\
&
\le G^2+\sigma^2
\end{aligned}
\end{equation}
together with \(\mathbb{E}[I_i(t)\mid w_t]=p_i^{\mathrm{true}}(t)\le 1\), we obtain
\begin{align}
\mathbb{E}[\|g_t\|^2\mid w_t]
&\le
N \sum_{i=1}^N
\frac{q_i^2}{(p_i'(t))^2}(G^2+\sigma^2)
\nonumber\\
&\le
\frac{N(G^2+\sigma^2)}{(p_{\min}-\delta)^2}
\sum_{i=1}^N q_i^2
\nonumber\\
&\le
\frac{N(G^2+\sigma^2)}{(p_{\min}-\delta)^2}
=:V
\end{align}
Hence
\begin{equation}
\mathbb{E}[\|g_t\|^2\mid w_t]\le V
\label{eq:second_moment}
\end{equation}
where \(V\) is finite and independent of \(t\).

\paragraph{Step 3: One-step recursion in squared distance.}
Let
\begin{equation}
r_t := \mathbb{E}\|w_t-w^*\|^2.
\end{equation}
Using the update rule,
\begin{align}
\|w_{t+1}-w^*\|^2
&=
\|w_t-w^*\|^2
-2\eta_t\langle w_t-w^*, g_t\rangle
+\eta_t^2\|g_t\|^2
\end{align}
Taking conditional expectation given \(w_t\),
\begin{equation}
\begin{aligned}
\mathbb{E}\!\left[\|w_{t+1}-w^*\|^2 \mid w_t\right]
&=
\|w_t-w^*\|^2
-2\eta_t\langle w_t-w^*, \nabla F(w_t)\\
& +b_t +\eta_t^2\mathbb{E}[\|g_t\|^2\mid w_t]
\end{aligned}
\end{equation}
Since \(F\) is \(\mu\)-strongly convex,
\begin{equation}
\langle \nabla F(w_t), w_t-w^*\rangle
\ge
\mu \|w_t-w^*\|^2
\label{eq:strong_conv_inner}
\end{equation}
Also, because \(F\) is the weighted average of functions with gradient norm
bounded by \(G\),
\begin{equation}
\|\nabla F(w)\|
=
\left\|
\sum_{i=1}^N q_i \nabla F_i(w)
\right\|
\le
\sum_{i=1}^N q_i \|\nabla F_i(w)\|
\le G
\end{equation}
Since \(\nabla F(w^*)=0\), strong convexity implies
\begin{equation}
\mu\|w_t-w^*\|
\le
\|\nabla F(w_t)-\nabla F(w^*)\|
=
\|\nabla F(w_t)\|
\le G
\end{equation}
hence
\begin{equation}
\|w_t-w^*\|\le \frac{G}{\mu}
\label{eq:radius_bound}
\end{equation}
Therefore, using \eqref{eq:bias_bound},
\begin{equation}
|\langle w_t-w^*, b_t\rangle|
\le
\|w_t-w^*\|\,\|b_t\|
\le
\frac{G}{\mu} B_\delta
\label{eq:bias_cross}
\end{equation}
Substituting \eqref{eq:strong_conv_inner}, \eqref{eq:second_moment},
and \eqref{eq:bias_cross},
\begin{align}
\mathbb{E}\!\left[\|w_{t+1}-w^*\|^2 \mid w_t\right]
&\le
\left(1-2\mu\eta_t\right)\|w_t-w^*\|^2
+
2\eta_t \frac{G}{\mu} B_\delta
+
\eta_t^2 V
\end{align}
Taking full expectation gives
\begin{equation}
r_{t+1}
\le
(1-2\mu\eta_t) r_t
+
2\eta_t \frac{G}{\mu} B_\delta
+
\eta_t^2 V
\label{eq:main_recursion}
\end{equation}

\paragraph{Step 4: Solve the recursion.}
Choose
\begin{equation}
\eta_t=\frac{1}{\mu(t+\gamma)},
\qquad \gamma\ge 1.
\end{equation}
Then \eqref{eq:main_recursion} becomes
\begin{equation}
r_{t+1}
\le
\left(1-\frac{2}{t+\gamma}\right) r_t
+
\frac{2G B_\delta}{\mu^2}\frac{1}{t+\gamma}
+
\frac{V}{\mu^2}\frac{1}{(t+\gamma)^2}
\label{eq:recursion_expanded}
\end{equation}
Define
\begin{equation}
A:=\frac{2G B_\delta}{\mu^2},
\qquad
C:=\frac{V}{\mu^2}
\end{equation}
We claim that there exists a constant \(K>0\), independent of \(t\) and \(\delta\),
such that
\begin{equation}
r_t \le \frac{K}{t+\gamma} + A
\qquad
\forall t\ge 0.
\label{eq:claim_rt}
\end{equation}
We prove this by induction.

For \(t=0\), choose \(K\ge \gamma r_0\), so the claim holds.
Assume it holds at time \(t\). Then from \eqref{eq:recursion_expanded},
\begin{align}
r_{t+1}
&\le
\left(1-\frac{2}{t+\gamma}\right)
\left(\frac{K}{t+\gamma}+A\right)
+
\frac{A}{t+\gamma}
+
\frac{C}{(t+\gamma)^2}
\nonumber\\
&=
A\left(1-\frac{1}{t+\gamma}\right)
+
\frac{K}{t+\gamma}
-
\frac{2K}{(t+\gamma)^2}
+
\frac{C}{(t+\gamma)^2}
\nonumber\\
&\le
A
+
\frac{K}{t+\gamma}
-
\frac{K}{(t+\gamma)^2}
\qquad
\text{provided }K\ge C.
\end{align}
Since
\begin{equation}
\frac{K}{t+\gamma+1}
=
\frac{K}{t+\gamma}
-
\frac{K}{(t+\gamma)(t+\gamma+1)}
\ge
\frac{K}{t+\gamma}
-
\frac{K}{(t+\gamma)^2}
\end{equation}
we obtain
\begin{equation}
r_{t+1}
\le
A + \frac{K}{t+\gamma+1}
\end{equation}
Thus \eqref{eq:claim_rt} holds for all \(t\).

Finally, since \(F\) is \(L\)-smooth and minimized at \(w^*\),
\begin{equation}
F(w_t)-F(w^*)
\le
\frac{L}{2}\|w_t-w^*\|^2
\end{equation}
Taking expectation and using \eqref{eq:claim_rt},
\begin{align}
\mathbb{E}[F(w_t)-F(w^*)]
&\le
\frac{L}{2} r_t
\nonumber\\
&\le
\frac{LK}{2(t+\gamma)}
+
\frac{L}{2}A
\nonumber\\
&=
\frac{LK}{2(t+\gamma)}
+
\frac{LG}{\mu^2} B_\delta
\end{align}

Recalling

\begin{equation}
B_\delta=\dfrac{G\delta}{p_{\min}-\delta}
\end{equation}

, we obtain
\begin{equation}
\mathbb{E}[F(w_t)-F(w^*)]
\le
\frac{C_1}{t+\gamma}
+
C_2\,\delta
\end{equation}
for suitable constants \(C_1,C_2>0\) independent of \(t\).
This proves the claim.
\end{proof}

\end{document}